\newcounter{myctr}
\def\myitem{\refstepcounter{myctr}\bibfont\noindent\ifnum\themyctr>9\else\phantom{0}\fi\hangindent17pt\themyctr.\enskip}
\begin{document}

\catchline{}{}{}{}{}

\title{ANALYSIS OF SINGLE-PARTICLE NONLOCALITY THROUGH THE PRISM OF WEAK MEASUREMENTS}

\author{DANKO GEORGIEV}

\address{Institute for Advanced Study, 30 Vasilaki Papadopulu Str. \\ Varna 9010, Bulgaria \\ danko.georgiev@mail.bg}

\author{ELIAHU COHEN}

\address{Faculty of Engineering and the Institute of Nanotechnology and Advanced Materials,
Bar Ilan University \\ Ramat Gan 5290002, Israel\\
eliahu.cohen@biu.ac.il}

\maketitle

\begin{history}
\received{15/11/2019}
\end{history}

\begin{abstract}
Although regarded today as an important resource in quantum information, nonlocality has yielded over the years many conceptual conundrums. Among the latter are nonlocal aspects of single particles which have been of major interest. In this paper, the nonlocality of single quanta is studied in a square nested Mach--Zehnder interferometer with spatially separated detectors using a delayed choice modification of quantum measurement outcomes that depend on the complex-valued weak values. We show that if spacelike separated Bob and Alice are allowed to freely control their quantum devices, the geometry of the setup constrains the local hidden variables models. In particular, hidden signaling and a list of contextual instructions are required to split a quantum state characterized by a positive Wigner function into two quantum states with non-positive Wigner functions. This implies that local hidden variables models could rely neither on only two hidden variables for position and momentum, nor on simultaneous factorizability of both the hidden probability densities and weights of splitting to reproduce the correct quantum distributions. While our analysis does not fully exclude the existence of nonfactorizable local hidden variables models, it demonstrates that the recently proposed weak values of quantum histories necessitate contextual splitting of prior commitments to measurement outcomes, due to functional dependence on the total Feynman sum that yields the complex-valued quantum probability amplitude for the studied quantum transition. This analysis also highlights the quantum nature of weak measurements.
\end{abstract}

\keywords{Nonlocality; Contextuality; Weak Measurements; Weak Values}


\markboth{D. Georgiev and E. Cohen}{Analysis of single-particle nonlocality through the prism of weak measurements}

\section{Introduction}

In 1927 at the \emph{Fifth Solvay International Conference on Electrons
and Photons}, Albert Einstein presented a thought experiment, which
conceptualized the nonlocality embedded in the collapse of the wavefunction of single quanta at the time of measurement \cite{Einstein1927}.
Einstein's experiment could be constructed with the use of a single-photon source, a beam splitter and two detectors, which are never found to click together at the same time as required by the conservation
of energy \cite{Guerreiro2012,Suarez2013}.
Since action at a distance contradicts the spirit of relativity theory, Einstein argued that quantum mechanics should be completed by the addition of local hidden variables that describe individual quantum processes \cite{Einstein1927}.
According to such viewpoint, individual quanta could preserve
local realism by always taking a single
path on their route to one of the measuring devices provided that these paths
remain hidden to direct experimental observation \cite{Selleri1988}.
Yet, subsequent research of entangled quantum systems \cite{Bell1964,Bell1966,Bertlmann2002,Aspect1982a,Aspect1982b,Hardy1992,Lundeen2009,Yokota2009} has shown that imposing locality on hidden variables leads to incorrect prediction of quantum results.

Classical realism demands that all physical observables have exact pre-determined values independently of whether they are measured \cite{Jerger2016}. The Kochen--Specker theorem, however, shows that quantum measurement outcomes are contextual and it is impossible for all quantum observables to have exact pre-determined values that are then revealed by the measurement \cite{Kochen1967,Mermin1990,Peres1990,Peres1991,Conway2006,Plastino2010,Leifer2014}.
Although classical systems are able to behave in a contextual manner \cite{Blasiak2015,Gondran2017,Li2017}, classical contextuality is based on a list of pre-determined input-output instructions that demand physical memory \cite{Kleinmann2011,Karanjai2018,Cabello2018}.
Einstein's revision of classical mechanics further asserts that classical systems should be relativistic and
integrate only local information that is transmitted at most at luminal speed \cite{Karimi2010,Karimi2015}.
In contrast, quantum systems
utilize nonlocal information about the settings of distant physical devices to enforce quantum correlations between distant measurement outcomes
at apparently superluminal speed \cite{Aharonov2000,Bancal2012,Gisin2014}.
Thus, exploiting the known link between the Kochen--Specker and Bell theorems \cite{Aolita2012,Genovese2019}, quantum contextuality of spatially separated entangled quantum systems can be manifested as a form of nonlocality.
Violation of Bell's inequalities \cite{Bell1964,Bell1966,Bertlmann2002} by entangled pairs of quanta has been extensively studied (for a detailed review see Refs.~\refcite{Genovese2019,Genovese2005})
and quantum nonlocality has been confirmed in numerous experiments \cite{Christensen2013,Giustina2015,Shalm2015,Abellan2018}.
The possible demonstration of quantum nonlocality with single quanta, however, has been debated and contested
\cite{Dunningham2007,Fuwa2015,Lee2017,Greenberger1995,Hardy1994,Hardy1995,Tan1991,Vaidman1995,Paraoanu2011}.

Previous works have studied the relations between contextuality and anomalous weak values \cite{Pusey2014,Piacentini2016Exp}. In this work we utilize a theorem \cite{Georgiev2018}, which relates the quantum probability amplitudes of individual virtual Feynman histories $\psi_i$ (defined by multi-time projection operators) and the sequential \cite{Mitchison2007,Piacentini2016Meas} weak values $A_w^{(i)}$ \cite{Aharonov1988,Dressel2012,Aharonov2014,Dressel2014,Dressel2015} of those Feynman histories, namely, the sequential weak value of an individual Feynman history $A_w^{(i')}$ is given by the ratio of the quantum probability amplitude for that particular history $\psi_{i'}$ and the sum of all coherently superposed Feynman histories with the same initial and final quantum states, $A_w^{(i')}=\frac{\psi_{i'}}{\sum_i \psi_{i}}$ (for further details see Section~\ref{sec:2}).
With the use of the latter theorem, we have designed an interferometric setup in which spacelike separated Alice and Bob are able to perform a delayed choice modification of the complex-valued weak values of alternative quantum histories \cite{Georgiev2018} in an attempt to understand better the nonlocal aspects of single quanta.
Proving that single quanta are nonlocal in the form of a ``no-go theorem'' requires mathematical demonstration that there exists no local hidden variables model that is capable to reproduce the standard quantum mechanical predictions. This is a challenging task because one needs to consider all possible strategies that local hidden variables models can attempt without violating Einstein's locality principle (for details see Section~\ref{sec:LHV}). Furthermore, by definition the local variables have ``hidden'' distributions that are unknown to us, hence there are very few mathematical operations that can be explicitly performed. This means that in order to derive a contradiction between the predictions of local hidden variables models and the predictions of standard quantum mechanics, one can use only very general mathematical constraints such as non-negative and normalized local hidden variable distributions, which are expected (but not guaranteed) to reproduce the correct quantum distributions for measured quantum observables after marginalization over the hidden variables.
Throughout the present theoretical study, we consider only predictions of the distributions that would be observed experimentally. We denote as ``quantum'' only the predictions obtained with the use of the standard quantum mechanical formalism. The predictions by the local hidden variables models are then ``required'' to reproduce the quantum outcomes after the marginalization over the hidden variables is performed. In other words, quantum nonlocality can be tested experimentally only if there are differences in the predictions by standard quantum mechanics and the most general local hidden variables model. Otherwise, if all quantum predictions can be replicated by a suitably constructed local hidden variables model, the setup cannot be claimed to demonstrate quantum nonlocality (for further details on this point see \ref{app:nonlocal}, \ref{app:Einstein} and \ref{app:Wheeler}).
In the interferometric setup that we propose, the quantum state before the last beam splitter is characterized by a positive Wigner function, whereas after the beam splitter the quantum evolution generates two quantum states with non-positive Wigner functions at each of the two detectors. This means that the generalized local hidden variables model cannot simply use the Wigner functions as hidden elements of reality because the hidden variable probability distributions have to be non-negative.
Because the Wigner function is the unique distribution in phase space that reproduces all rotated quadratures after marginalization, this also prevents the local hidden variables model from having only two hidden variables for position and momentum, necessitating introduction of extra hidden variables for measurement of different rotated quadratures.
The geometry of the setup also rules out local hidden variables models that rely on simultaneous factorizability of the hidden probability densities and the hidden weights of splitting to reproduce the correct quantum distributions. Deciding whether nonfactorizable local hidden variables models of the setup exist may require development of mathematical techniques for solving of systems of integral equations with unknown nonfactorizable kernels, which we leave for future work. In addition, the various tools employed here including the sum over histories approach, the Wigner function treatment and the derivations of various pointer distributions might be of interest for the weak value community.

\section{Square nested Mach--Zehnder interferometer\label{sec:2}}

In the setup shown in Fig.~\ref{fig:1},
Alice and Bob each have access to only one of two interferometer arms,
which are separated by a large distance.
On path~1, Alice measures an observable of the pointer that is a function of the weak value $(x_{1})_{w}$ of the position projector $\hat{x}_{1}=|x_{1}\rangle\langle x_{1}|$
using a measuring device~$M$.
On path~3, Bob chooses the value of a phase shifter $\varphi\in(-\pi,\pi)$.
In comparison to an earlier modification of Einstein's experiment \cite{Hardy1994}, the presented setup is a genuine single quantum experiment because the quantum under study is in an eigenstate of the particle number operator. Furthermore, since superposition of the single quantum with the vacuum state is not required \cite{Dunningham2007,Fuwa2015,Lee2017,Greenberger1995,Hardy1994,Hardy1995,Tan1991,Vaidman1995}, the experiment is not subject to superselection rules and applies to massive particles as well as photons.

If the quantum particle moves at speed $v$ and the interferometer arms are of length $L$, to ensure lack of signaling between Alice and Bob at a distance $\sqrt{2}L$, all detectors
$D_{i}$ and Alice's device $M$ have to be strongly measured
at a time $t = L/v$ after Bob's action, which is
$\Delta t = (\frac{\sqrt{2}}{c}-\frac{1}{v})L$ ahead of time before any signal from Bob
could reach Alice at luminal speed. To ensure $\Delta t > 0$, the quantum particle has to move at speed $v>\frac{1}{\sqrt{2}}c$.

Since we are interested in different post-selections, we will use superscripts
to denote the quantum probability amplitudes $\psi_{j}^{(i)}$ propagating
along different quantum histories $\hat{D}_{i}\odot\hat{x}_{j}\odot\hat{S}$,
where $i\in\{1,2,3\}$ denotes the detectors~$D_{i}$,
$j\in\{1,2,3\}$ denotes the interferometer arms~$x_{j}$, and $\hat{S}$ denotes the source of single particles. Thus,
there is a one-to-one correspondence
\begin{equation}
\psi_{j}^{(i)}\qquad\leftrightarrow\qquad\hat{D}_{i}\odot\hat{x}_{j}\odot\hat{S}.
\end{equation}
Also, we will denote the weak value of the projector $\hat{x}_{j}$
for post-selected $D_{i}$ as $(x_{j})_{w}^{(i)}$, which is given by
the ratio of the quantum probability amplitude for the individual
Feynman history through $x_{j}$ and the total Feynman sum from $S$
to $D_{i}$ (cf. Theorem 8 in Ref.~\refcite{Georgiev2018})
\begin{equation}
(x_{j})_{w}^{(i)}=\frac{\psi_{j}^{(i)}}{\sum_{j'}\psi_{j'}^{(i)}}.
\label{eq:weak}
\end{equation}

\begin{figure}[t]
\begin{centering}
\includegraphics[width=80mm]{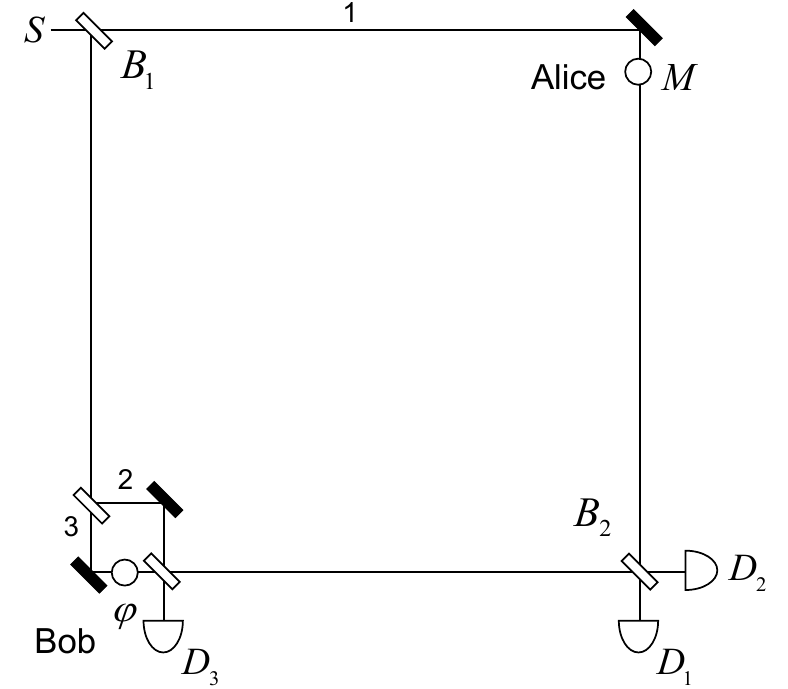}
\par\end{centering}
\caption{\label{fig:1}A square nested Mach--Zehnder interferometer in which a single quantum particle can travel from
the source $S$ to three detectors $D_{i}$.
On path~1, Alice measures an observable of the meter pointer that is a function of the weak value $(x_{1})_{w}$, while on path~3, Bob controls a phase shifter~$\varphi$.}
\end{figure}

\subsection{Post-selection at \texorpdfstring{$D_{1}$}{D1}}

For the calculation of the weak value $(x_{1})_{w}^{(1)}$, there are only 3 continuous quantum histories
from the source $S$ to detector $D_{1}$ that
need to be summed over
\begin{equation}
\psi_{1}^{(1)}=\frac{1}{2}\imath;\qquad\psi_{2}^{(1)}=\frac{1}{4}\imath;\qquad\psi_{3}^{(1)}=-\frac{1}{4}\imath e^{\imath\varphi}.
\end{equation}
The weak value
to be measured by Alice depends nonlocally on Bob's choice
\begin{equation}
(x_{1})_{w}^{(1)}=\frac{\psi_{1}^{(1)}}{\psi_{1}^{(1)}+\psi_{2}^{(1)}+\psi_{3}^{(1)}}=\frac{\frac{1}{2}\imath}{\frac{1}{2}\imath+\frac{1}{4}\imath-\frac{1}{4}\imath e^{\imath\varphi}}=\frac{2}{3-e^{\imath\varphi}};\label{eq:Aw-D1}
\end{equation}
\begin{gather}
\textrm{Re}\left[(x_{1})_{w}^{(1)}\right]=\frac{3-\cos\varphi}{5-3\cos\varphi};\quad
\textrm{Im}\left[(x_{1})_{w}^{(1)}\right]=\frac{\sin\varphi}{5-3\cos\varphi};\quad
|(x_{1})_{w}^{(1)}|^{2}=\frac{2}{5-3\cos\varphi}.
\end{gather}
In the absence of the weak measuring device, detector $D_{1}$ clicks
with probability $P_{1}=\frac{1}{8}(5-3\cos\varphi)$.
In the presence of the weak measuring device this probability is modified
by a certain amount given in Eq.~\eqref{eq:Prob-D1}.

\subsection{Post-selection at \texorpdfstring{$D_{2}$}{D2}}

There are only 3 continuous quantum histories from the source $S$ to
detector $D_{2}$ that need to be summed over
\begin{equation}
\psi_{1}^{(2)}=-\frac{1}{2};\qquad\psi_{2}^{(2)}=\frac{1}{4};\qquad\psi_{3}^{(2)}=-\frac{1}{4} e^{\imath\varphi}.
\end{equation}
The weak value again depends
nonlocally on Bob's choice, but solely due to its imaginary part
\begin{equation}
(x_{1})_{w}^{(2)}=\frac{\psi_{1}^{(2)}}{\psi_{1}^{(2)}+\psi_{2}^{(2)}+\psi_{3}^{(2)}}=\frac{-\frac{1}{2}}{-\frac{1}{2}+\frac{1}{4}-\frac{1}{4}e^{\imath\varphi}}=\frac{2}{1+e^{\imath\varphi}};\label{eq:Aw-D2}
\end{equation}
\begin{gather}
\textrm{Re}(x_{1})_{w}^{(2)}=1;\quad
\textrm{Im}(x_{1})_{w}^{(2)}=-\frac{\sin\varphi}{1+\cos\varphi};\quad
|(x_{1})_{w}^{(2)}|^{2}=\frac{2}{1+\cos\varphi}.
\end{gather}
In the absence of the weak measuring device, detector $D_{2}$ clicks
with probability $P_{2}=\frac{1}{8}(1+\cos\varphi)$.
In the presence of the weak measuring device this probability is modified
by a certain amount given in Eq.~\eqref{eq:Prob-D2}.

\subsection{Post-selection at \texorpdfstring{$D_{3}$}{D3}}

Because the quantum history through $x_{1}$ contributes zero quantum probability amplitude to $D_3$,
the weak value is always zero
\begin{equation}
(x_{1})_{w}^{(3)} = 0 . \label{eq:Aw-D3}
\end{equation}
Detector $D_{3}$ clicks with probability
$P_3=\frac{1}{4}(1+\cos\varphi)$ with or without the weak measuring device.

Let the experiment be repeated many times, using a single quantum particle
in the interferometer per each run. We are interested to find out
whether it is possible to explain the observed outcomes using
any local hidden variables model of quantum mechanics.

\section{Quantum measurements of pointer observables dependent on weak values}

Before we proceed further, it is important to know the quantum distributions for
pointer observables of Alice's weak measuring device $M$, which are functionally dependent on the weak
values of quantum histories for different post-selections of the single quantum particle traversing the interferometer.

Alice's measuring device $M$ starts with a real-valued Gaussian position wavefunction
centered at zero
\begin{equation}
\phi_{0}(x)=\frac{1}{(2\pi\sigma^2)^{\frac{1}{4}}} e^{-\frac{x^{2}}{4\sigma^{2}}}. \label{eq:meter-0}
\end{equation}

The interaction Hamiltonian between the
single quantum particle $S$ and the measuring device $M$ is
\begin{equation}
\hat{H}_{\textrm{int}}=g\delta(t-t_{m})\,\hat{A}\otimes\hat{p}_x,
\end{equation}
where $\hat{A}$ is an observable for the measured quantum particle~$S$ and
$\hat{p}_x=\hbar\hat{k}_x$
is the meter variable conjugate to the meter pointer variable~$\hat{x}$.
Further, the quantum particle~$S$ evolves with internal Hamiltonian
$\hat{H}_{S}\otimes\hat{I}_M$, while the internal Hamiltonian
of the meter is suppressed $\hat{I}_S\otimes\hat{H}_{M}=0$.

The composite system starts from the initial state
\begin{equation}
|\psi_{i}\rangle|\phi_{0}\rangle=|\psi_{i}\rangle \int_{-\infty}^{\infty}\phi_{0}(x)|x\rangle \, dx
\end{equation}
and evolves in time with the operator
\begin{equation}
\hat{\mathcal{T}}_{f,m}\,e^{-\frac{\imath}{\hbar}g\,\hat{A}\otimes\hat{p}_{x}}\hat{\mathcal{T}}_{m,i}
\end{equation}
where $\hat{\mathcal{T}}_{b,a}=e^{-\frac{\imath}{\hbar}\int_{t_{a}}^{t_{b}}\hat{H}_{S}\otimes\hat{I}_{M}\,dt}$
are internal time evolution operators of the measured quantum particle~$S$.

For the post-selected system in a final state $|\psi_{f}\rangle$,
the projected (not normalized) final meter wavefunction in the position
basis is
\begin{equation}
\phi_{f}(x)=
\frac{\langle\psi_{f}|\hat{\mathcal{T}}_{f,i}|\psi_{i}\rangle}{(2\pi\sigma^{2})^{\frac{1}{4}}}\left[\left(1-A_{w}\right)e^{-\frac{x^{2}}{4\sigma^{2}}}+A_{w}e^{-\frac{(x-g)^{2}}{4\sigma^{2}}}\right]
\label{eq:wavefunc-x}
\end{equation}
where $A_{w}$ is the weak value defined in Eq.~\eqref{eq:weak}, and we have utilized the fact that the observable of interest is a projection operator represented by an idempotent matrix, $\hat{A}^{2}=\hat{A}$. (For a complete analytic derivation of the latter formula see \ref{app:analytic}.)

After applying the Born rule, the corresponding final meter probability density distribution is
\begin{align}
\Phi_{f}(x)=\left|\phi_{f}(x)\right|^{2}=& \left|\langle\psi_{f}|\hat{\mathcal{T}}_{f,i}|\psi_{i}\rangle\right|^{2}\frac{1}{\sqrt{2\pi\sigma^{2}}}e^{-\frac{x^{2}}{2\sigma^{2}}}\nonumber\\
&\times\left[1-2\textrm{Re}\left(A_{w}\right)\left(1-e^{\frac{x^{2}-(x-g)^{2}}{4\sigma^{2}}}\right)+\left|A_{w}\right|^{2}\left(1-e^{\frac{x^{2}-(x-g)^{2}}{4\sigma^{2}}}\right)^{2}\right].
\label{eq:Phi-f-x}
\end{align}

If Alice measures $M$ in the wavenumber basis, the final meter wavefunction is Fourier transformed into
\begin{equation}
\phi_{f}(k)=
\langle\psi_{f}|\hat{\mathcal{T}}_{f,i}|\psi_{i}\rangle\left(\frac{2}{\pi}\sigma^{2}\right)^{\frac{1}{4}}e^{-k^{2}\sigma^{2}}\left[1-A_{w}+A_{w}e^{-\imath gk}\right]
\label{eq:wavefunc-k}
\end{equation}
with corresponding final meter probability density distribution
\begin{align}
\Phi_{f}(k)=\left|\phi_{f}(k)\right|^{2}=&\left|\langle\psi_{f}|\hat{\mathcal{T}}_{f,i}|\psi_{i}\rangle\right|^{2}\sqrt{\frac{2}{\pi}}\sigma e^{-2k^{2}\sigma^{2}}\nonumber\\
&\times\left\{ 1+2\textrm{Im}\left(A_{w}\right)\sin\left(gk\right)-2\left[\textrm{Re}\left(A_{w}\right)-\left|A_{w}\right|^{2}\right]\left[1-\cos\left(gk\right)\right]\right\} .
\label{eq:Phi-f-k}
\end{align}

The probability of post-selection of the quantum particle $S$ in a final meter state $|\psi_{f}\rangle$
at detector $D_{f}$ is affected by a certain amount due to the weak
interaction with the measuring device $M$ as follows
\begin{align}
\textrm{Prob}\left(D_{f}\right)=&\int_{-\infty}^{\infty}\left|\phi_{f}(x)\right|^{2}\,dx=\int_{-\infty}^{\infty}\left|\phi_{f}(k)\right|^{2}\,dk\nonumber\\
=&~\left|\langle\psi_{f}|\hat{\mathcal{T}}_{f,i}|\psi_{i}\rangle\right|^{2}\left\{ 1-2\left[\textrm{Re}\left(A_{w}\right)-\left|A_{w}\right|^{2}\right]\left(1-e^{-\frac{g^{2}}{8\sigma^{2}}}\right)\right\}
\label{eq:prob}
\end{align}
where $|\langle\psi_{f}|\hat{\mathcal{T}}_{f,i}|\psi_{i}\rangle|^{2}$
is the probability for the quantum particle to end at detector $D_{f}$
in the interferometer in the absence of weak measuring device and
the second factor is due to quantum interference of the pointer position
involving the corresponding weak value.

\section{Local hidden variables model}
\label{sec:LHV}
Einstein's principle of locality states that an action performed on a system $S_1$ must not modify the physical description of another system $S_2$ for any two physical systems that are spacelike separated. Thus, an effect cannot occur from a cause that is not in its past
light cone. Similarly, a cause cannot have an effect
outside its future light cone. A physical model is \emph{local} if
it satisfies Einstein's principle of locality.

Relativistic classical particles
can travel along a single path, but not along two or more paths at the same time.
In order to reproduce the correct quantum distributions that depend on the weak values of corresponding quantum histories, the particle needs to visit both Alice (to affect her weak measuring device $M$) and Bob (to obtain information about his choice of phase shifter $\varphi$). Since in our setup the particle does not have the time to visit at a luminal speed both Alice and Bob along a single path, the particle would necessarily fail to reproduce correctly all the observable quantum outcomes without invoking some additional hidden signaling.

With hidden signaling, the particle may receive new information while traveling inside the interferometer and
may use a list of contextual instructions for navigation.
To make local choices, however, the particle has to rely on classical mixtures of statistical distributions.

To proceed, we consider a general local hidden variables model characterized by the following three properties (for a detailed discussion of each property see \ref{app:Einstein} and \ref{app:Wheeler}):

(1) The particle and Alice's measuring device possess a probabilistic
mechanism that could generate an outcome drawn from any given statistical distribution
$\Lambda$.

(2) If hidden signals exist, they travel at most at luminal speed
and cannot have any physical effects outside their future light cones.

(3) The particle possesses memory and executes a list of contextual instructions,
which allow usage of new information obtained through hidden signals.

In addition, we arrange the setup so that Bob is able to choose the setting of the phase shifter~$\varphi$ in a delayed fashion only after the particle has passed the first beam splitter $B_1$.

In a spacelike separated manner from Bob's action, Alice performs her projective measurement of the weak measuring device $M$ shortly after the single quantum has passed the second beam splitter $B_2$, but before hidden signaling from $B_2$ could reach $M$. If Alice is given the choice to occasionally block completely her interferometer arm, she will detect the quantum particle with probability of $\frac{1}{2}$. Therefore, in order to be consistent with the quantum mechanical predictions any
local hidden variables model should predict that at the first beam splitter~$B_{1}$,
the particle goes with equal probability toward either Alice or Bob.

\section{Constraints imposed by Alice's choice\label{sec:Alice}}

Because Alice chooses which observable ($\hat{x}$ or $\hat{k}$) to measure on $M$ only after the single quantum has passed through all beam splitters, the local hidden variables model is constrained to operate without knowledge of Alice's choice. In such case, the probabilistic mechanism possessed by quantum systems in the local hidden variables model could operate in two modes of commitment:

(1) Commitment to a distribution $\Lambda$ in case some particular observable is measured.

(2) Commitment to an outcome $\lambda$ drawn from distribution $\Lambda$ in case some particular observable is measured.

We will analyze each of these two possibilities in turn.

\subsection{Case \texorpdfstring{$\varphi=0$}{zero phi}}

The initial setting of the interferometer introduces no phase shift,
$\varphi=0$, at Bob's location. If Bob does not change the phase shifter $\varphi$, the quantum could reach
detector $D_{3}$ with probability of $\frac{1}{2}$ while the weak
measuring device is committed to one of two incompatible initial probability distributions (default Bob's distributions)
\begin{align}
\Phi_{B}(x) =& \frac{1}{\sqrt{2\pi\sigma^{2}}}e^{-\frac{x^{2}}{2\sigma^{2}}}, \\
\Phi_{B}(k) =& \sqrt{\frac{2}{\pi}}\sigma e^{-2k^{2}\sigma^{2}},
\end{align}
 which is going to be generated at the time when Alice chooses to actually measure $\hat{x}$ or $\hat{k}$.

Alternatively, the quantum could reach detectors $D_{1}$ or $D_{2}$
with probability of $\frac{1}{4}$ each, while the weak measuring
device is committed to change conditionally on Alice's choice of a measurement basis into one of the following final probability distributions (default Alice's distributions)
\begin{align}
\Phi_{A}(x) =& \frac{1}{\sqrt{2\pi\sigma^{2}}}e^{-\frac{\left(x-g\right)^{2}}{2\sigma^{2}}},\\
\Phi_{A}(k) =& \sqrt{\frac{2}{\pi}}\sigma e^{-2k^{2}\sigma^{2}}.
\end{align}

The local hidden variables model is able to reproduce the correct quantum
distributions through hidden signaling obtained from the point
of bifurcation at the first beam splitter $B_{1}$ as follows: To replicate
correctly the quantum probabilities, at $B_{1}$ the particle has
to go with equal probability to one of the two interferometer arms
that lead to Alice or Bob. If at $B_{1}$ the particle goes to Bob,
it has to exit always at $D_{3}$ to account for $\textrm{Prob}\left(D_{3}\right)=\frac{1}{2}$.
Alice's device $M$ is cued by the particle absence to make a commitment to select probabilistically
an outcome from the initial distributions $\Phi_{B}$ upon measurement of $\hat{x}$ or $\hat{k}$. If at $B_{1}$
the particle goes to Alice, the device $M$ is cued by the particle
presence to make a commitment to select probabilistically an outcome from the shifted distributions
$\Phi_{A}$ upon measurement of $\hat{x}$ or $\hat{k}$, after which the particle goes to $B_{2}$, where it is
reflected with equal probability to either $D_{1}$ or $D_{2}$ to account
for $\textrm{Prob}\left(D_{1}\right)=\textrm{Prob}\left(D_{2}\right)=\frac{1}{4}$.
Thus, the local hidden variables model reproduces exactly the correct quantum
distributions if the quantum experiment is performed without
any action by Bob.

\subsection{Case \texorpdfstring{$\varphi \neq 0$}{non-zero phi}}

The experimental setup is arranged so that Alice and Bob could choose to completely block
their interferometer arms in a delayed choice fashion after the quantum particle
has passed $B_{1}$. From coincident measurements when both Alice
and Bob block their corresponding arms, it could be established that
at $B_{1}$ the quantum particle goes with probability $\frac{1}{2}$ to Alice
and $\frac{1}{2}$ to Bob. Because for $\varphi=0$ the distributions
of the weak measuring device $\Phi_{A}$ and $\Phi_{B}$ are perfectly
correlated with the beam splitting of ``non-empty waves'' for the single quantum particle at $B_{1}$,
the local hidden variables model should use hidden signaling (``empty
waves'') propagated from $B_{1}$ to $M$ in order to select from $\Phi_{A}$
or $\Phi_{B}$ in perfect correlation with the quantum particle (``non-empty
wave'') traversing Alice's or Bob's interferometer arm, respectively.
To enforce locality, ``empty waves'' can never produce a quantum
particle, whereas the ``non-empty wave'' always produce the quantum
particle upon measurement (for details, see \ref{app:nonlocal}, \ref{app:Einstein} and \ref{app:Wheeler}).

If Bob introduces a non-zero phase shift, $\varphi\neq0$, in a delayed choice fashion such that
the particle has already passed $B_{1}$ and Bob's choice is spacelike
separated from the final projective measurement of the weak measuring
device $M$, at the second beam splitter $B_{2}$ the local hidden variables model could deliver remote information about~$\varphi$ (obtained through hidden signaling from Bob to~$B_2$) to operate on a statistical mixture of $\Phi_{A}$~and~$\Phi_{B}$ due to contribution of ``non-empty waves'' from Bob's arm:
The quanta that arrive at detector $D_{3}$ with probability of $\frac{1}{4}(1+\cos\varphi)$
are perfectly correlated with $\Phi_{B}$ thereby reproducing correctly
the quantum outcomes for post-selected $D_{3}$. Because $\frac{1}{2}$
of ``non-empty waves'' traverse Bob's interferometer arm, at $B_{2}$
arrive quanta correlated with $\Phi_{A}$ with probability of $\frac{1}{2}$ and quanta correlated with $\Phi_{B}$ with probability $\frac{1}{2}-\frac{1}{4}(1+\cos\varphi)=\frac{1}{4}(1-\cos\varphi)$.
The hidden signaling received at $B_{2}$ provides remote information
for both Bob's choice $\varphi$ and the prior commitment to $\Phi_i$ at $M$ allowing for
statistical mixing of $\Phi_{A}$ and $\Phi_{B}$.

In essence, at the second beam splitter $B_2$ the local hidden variables model will have to use the information
for Bob's choice $\varphi$ in order to split single quanta with the following distributions
\begin{eqnarray}
\Phi_{+}(x) & = & \frac{1}{2}\Phi_{A}(x)+\frac{1}{4}\left(1-\cos\varphi\right)\Phi_{B}(x), \label{eq:AB-1}\\
\Phi_{+}(k) & = & \frac{1}{2}\Phi_{A}(k)+\frac{1}{4}\left(1-\cos\varphi\right)\Phi_{B}(k). \label{eq:AB-2}
\end{eqnarray}
This is consistent with the correct quantum distributions as a consequence of the quantum no-communication theorem \cite{Eberhard1978,Ghirardi1980,dEspagnat2003,Peres2004}. Indeed, from Eqs.~\eqref{eq:Phi-f-x} and~\eqref{eq:Phi-f-k} (see also \ref{app:distributions}) it can be directly verified that
\begin{eqnarray}
\Phi_{+}(x) & = & \Phi_{1}(x)+\Phi_{2}(x), \label{Phi+x}\\
\Phi_{+}(k) & = & \Phi_{1}(k)+\Phi_{2}(k), \label{Phi+k}
\end{eqnarray}
Because Alice's choice of measurement basis ($x$ or $k$) is not available at $B_2$, however, the local hidden variables model may attempt to split the available quantum distributions before $B_2$, $\Phi_{+}(x)$ or $\Phi_{+}(k)$, into the correct quantum distributions after $B_2$, $\Phi_{1}(x)$ or $\Phi_{1}(k)$ for~$D_1$ and $\Phi_{2}(x)$ or $\Phi_{2}(k)$ for~$D_2$, using one of the following two modes of commitment.

\subsubsection{Commitment to a distribution}

Because Alice's measuring device $M$ does not have access to Bob's choice $\varphi$, it has to make a commitment for future action, which is then sent to the second beam splitter $B_2$. If $M$ is committed only to a distribution $\Phi_A$ or $\Phi_B$, the task at $B_2$ would be to prepare the correct quantum distributions for $D_1$ or $D_2$ as a convex combination of $\Phi_A$ and $\Phi_B$. This task, however, cannot be achieved for $\varphi\neq 0$.

\begin{theorem}
Let a normalized distribution $\Phi_i (\lambda)$ be a convex combination of two other normalized distributions $\Phi_A (\lambda)$ and $\Phi_B (\lambda)$ expressed as
\begin{equation}
\Phi_i (\lambda) = w_A \Phi_A (\lambda) + w_B \Phi_B (\lambda) \label{eq:convex}
\end{equation}
where $w_A\geq 0$, $w_B\geq 0$, and $w_A + w_B = 1$. Then, the corresponding weights are constants determined by
\begin{eqnarray}
w_A &=& \frac{\Phi_i (\lambda) - \Phi_B (\lambda)}{\Phi_A (\lambda) - \Phi_B (\lambda)}, \label{eq:wA}\\
w_B &=& \frac{\Phi_i (\lambda) - \Phi_A (\lambda)}{\Phi_B (\lambda) - \Phi_A (\lambda)}. \label{eq:wB}
\end{eqnarray}
\end{theorem}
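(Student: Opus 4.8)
The plan is to solve the defining relation in Eq.~\eqref{eq:convex} as a linear system in the two unknown weights, using the normalization constraint to reduce it to a single unknown. The only genuinely nontrivial content is that the resulting expressions, although written as pointwise ratios of the distributions, in fact return a single $\lambda$-independent constant.

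First I would use $w_A + w_B = 1$ to eliminate $w_B$, writing $w_B = 1 - w_A$ and substituting into Eq.~\eqref{eq:convex}. Collecting the terms proportional to $w_A$ then yields
\begin{equation}
\Phi_i(\lambda) - \Phi_B(\lambda) = w_A\left[\Phi_A(\lambda) - \Phi_B(\lambda)\right].
\end{equation}
Dividing by $\Phi_A(\lambda) - \Phi_B(\lambda)$ at any point $\lambda$ where this difference does not vanish gives Eq.~\eqref{eq:wA}. The companion formula Eq.~\eqref{eq:wB} follows at once from the symmetric substitution $w_A = 1 - w_B$, interchanging the roles of $\Phi_A$ and $\Phi_B$.

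The point that deserves emphasis is that the left-hand side $w_A$ is, by the hypothesis of the theorem, a single fixed number, so the right-hand side of Eq.~\eqref{eq:wA} must evaluate to the same value for every admissible $\lambda$. The main subtlety will therefore be the vanishing of the denominator: wherever $\Phi_A(\lambda) = \Phi_B(\lambda)$ the ratio is an indeterminate $0/0$ and places no constraint, so the weight is pinned down by any $\lambda$ at which the two component distributions differ, and consistency then demands that the ratio take one and the same value across all such points. In the degenerate case $\Phi_A \equiv \Phi_B$ the decomposition is no longer unique and the weights cease to be determined, which is precisely why the statement carries force only when the two component distributions are genuinely distinct --- exactly the situation exploited later in the analysis of the splitting at $B_2$.
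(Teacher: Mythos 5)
Your proof is correct and follows essentially the same route as the paper's own (substitute $w_B = 1 - w_A$, rearrange, and symmetrically for $w_B$). Your added remark about the indeterminacy where $\Phi_A(\lambda) = \Phi_B(\lambda)$ is a worthwhile refinement the paper's terse proof omits, and it anticipates exactly the degenerate $k$-basis case ($\Phi_A(k) \equiv \Phi_B(k)$) that the paper exploits immediately afterward.
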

\begin{proof}
Substitution of $w_B = 1 -w_A$ in Eq.~\eqref{eq:convex} followed by algebraic rearrangement gives Eq.~\eqref{eq:wA}. Similarly, Eq.~\eqref{eq:wB} is obtained using $w_A = 1 -w_B$ in Eq.~\eqref{eq:convex}.
\end{proof}

\begin{figure}[t]
\begin{centering}
\includegraphics[width=125mm]{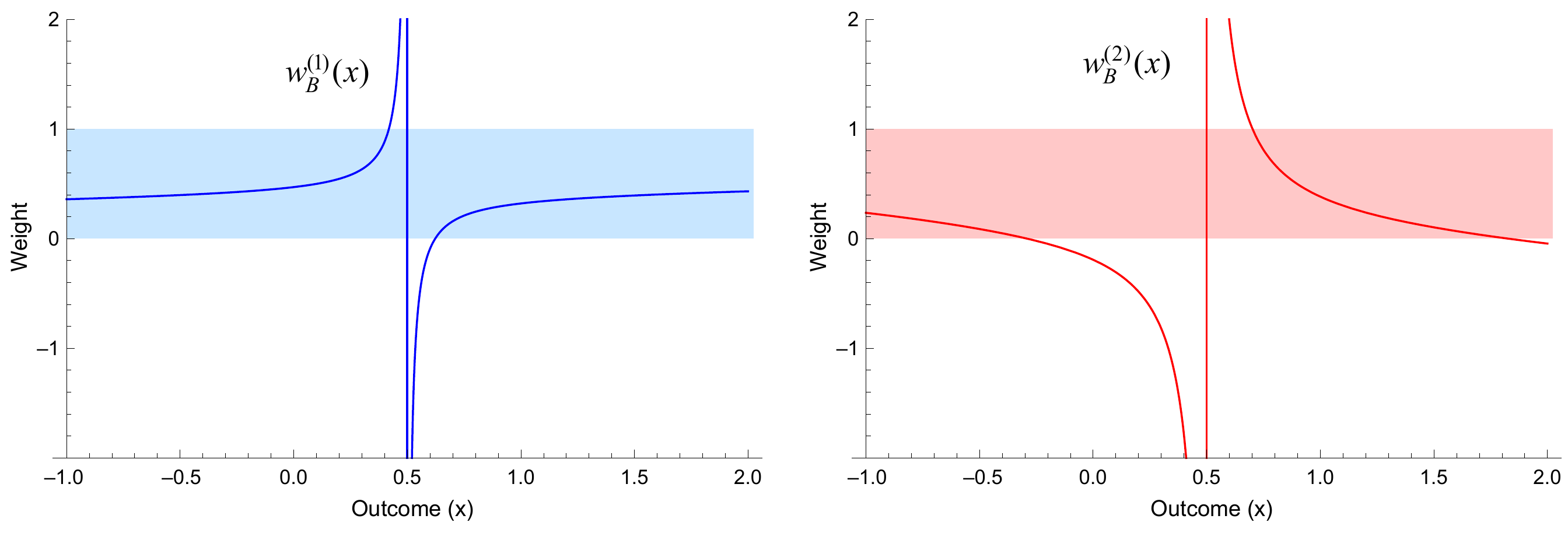}
\par\end{centering}
\caption{\label{fig:2} Plots of $w_B^{(1)} (x)$ and $w_B^{(2)} (x)$ for $g=1$, $\sigma=1$ and $\varphi=\frac{\pi}{2}$. The regions between $0$ and $1$ indicate weights accessible by local hidden variables models through convex combination of default distributions. Critical tests of such local hidden variables models could be performed for $x$ outcomes whose weights are outside the region $[0,1]$.}
\end{figure}

The normalized quantum distributions $\tilde{\Phi}_1 (x)$, $\tilde{\Phi}_2 (x)$, $\tilde{\Phi}_1 (k)$ or $\tilde{\Phi}_2 (k)$
for post-selected detectors $D_1$ or $D_2$ that need to be reproduced by the local hidden variables model are given in \ref{app:distributions}. To experimentally rule out the hidden variables model, it is sufficient to identify regions with quantum measurement outcomes for which the weights are outside the admissible region $\left[0,1\right]$.

For measurement in $x$-basis, $w_A^{(1)} (x) = \frac{\tilde{\Phi}_1 (x) - \Phi_B (x)}{\Phi_A (x) - \Phi_B (x)}$ contains a region of $x$ outcomes with $w_A^{(1)} (x) >1$ for $\varphi\in (0,2\pi)$ and a region of $x$ outcomes with $w_A^{(1)} (x)<0$ for $\varphi\in (0.11,2\pi-0.11)$. Similarly, $w_A^{(2)} (x) = \frac{\tilde{\Phi}_2 (x) - \Phi_B (x)}{\Phi_A (x) - \Phi_B (x)}$ contains a region of $x$ outcomes with $w_A^{(2)} (x)>1$ for $\varphi\in (0,2\pi)$ and a region of $x$ outcomes with $w_A^{(2)} (x)<0$ for $\varphi\in (0.065,2\pi-0.065)$. Converse results with respect to $0$ and $1$ hold for $w_B^{(1)} (x) = 1-w_A^{(1)}(x)$ and $w_B^{(2)} (x) = 1-w_A^{(2)}(x)$ (see Fig.~\ref{fig:2} for a numerical example).

For measurement in $k$-basis, all weights are undefined due to division by zero resulting from $\Phi_{A}(k)=\Phi_{B}(k)$. In other words, it is impossible to obtain shifted distributions $\Phi_{1}(k)$ or $\Phi_{2}(k)$ through convex mixing of the initial distribution with itself.

\subsubsection{Commitment to an outcome}

The remaining alternative that could be attempted by the local hidden variables model is to commit $M$ for each run $j$ to a particular outcome $x_j$ or $k_j$ drawn from the correct quantum distributions $\Phi_{+}(x)$ or $\Phi_{+}(k)$.

If Alice is limited to a single measurement of the same observable, the local hidden variables model could easily use the weights of splitting towards the two detectors in order to prepare the correct distributions $\Phi_1$ and~$\Phi_2$.

As an example, suppose that Alice always measures the observable $\hat{x}$. The weights of splitting towards $D_1$ or $D_2$ at the second beam splitter $B_2$ are
\begin{equation}
w_{1}(x)=\frac{\Phi_{1}(x)}{\Phi_{+}(x)},\quad\quad w_{2}(x)=\frac{\Phi_{2}(x)}{\Phi_{+}(x)}. \label{w-x}
\end{equation}
Because the weights of splitting are not constant, but exhibit functional dependence on the outcome $x$, the prior commitment of $M$ to generate particular $x_j$ would provide the required information at $B_2$ for correct reproduction of the quantum distributions $\Phi_1(x)$ and $\Phi_2(x)$.

Similarly, if Alice always measures the observable~$\hat{k}$, the prior commitment of $M$ to generate $k_j$ would provide the required information at $B_2$ for splitting with the correct weights
\begin{equation}
w_{1}(k)=\frac{\Phi_{1}(k)}{\Phi_{+}(k)},\quad\quad w_{2}(k)=\frac{\Phi_{2}(k)}{\Phi_{+}(k)}. \label{w-k}
\end{equation}

To study quantum nonlocality in our setup, however, we have allowed Alice to choose which observable to measure, $\hat{x}$ or $\hat{k}$, only after the single quantum particle has passed the second beamplitter $B_2$. The local hidden variables model is thereby forced to attempt matching $x_j$ and $k_j$ outcomes based on their weights of splitting.

\begin{figure*}[t]
\begin{centering}
\includegraphics[width=125mm]{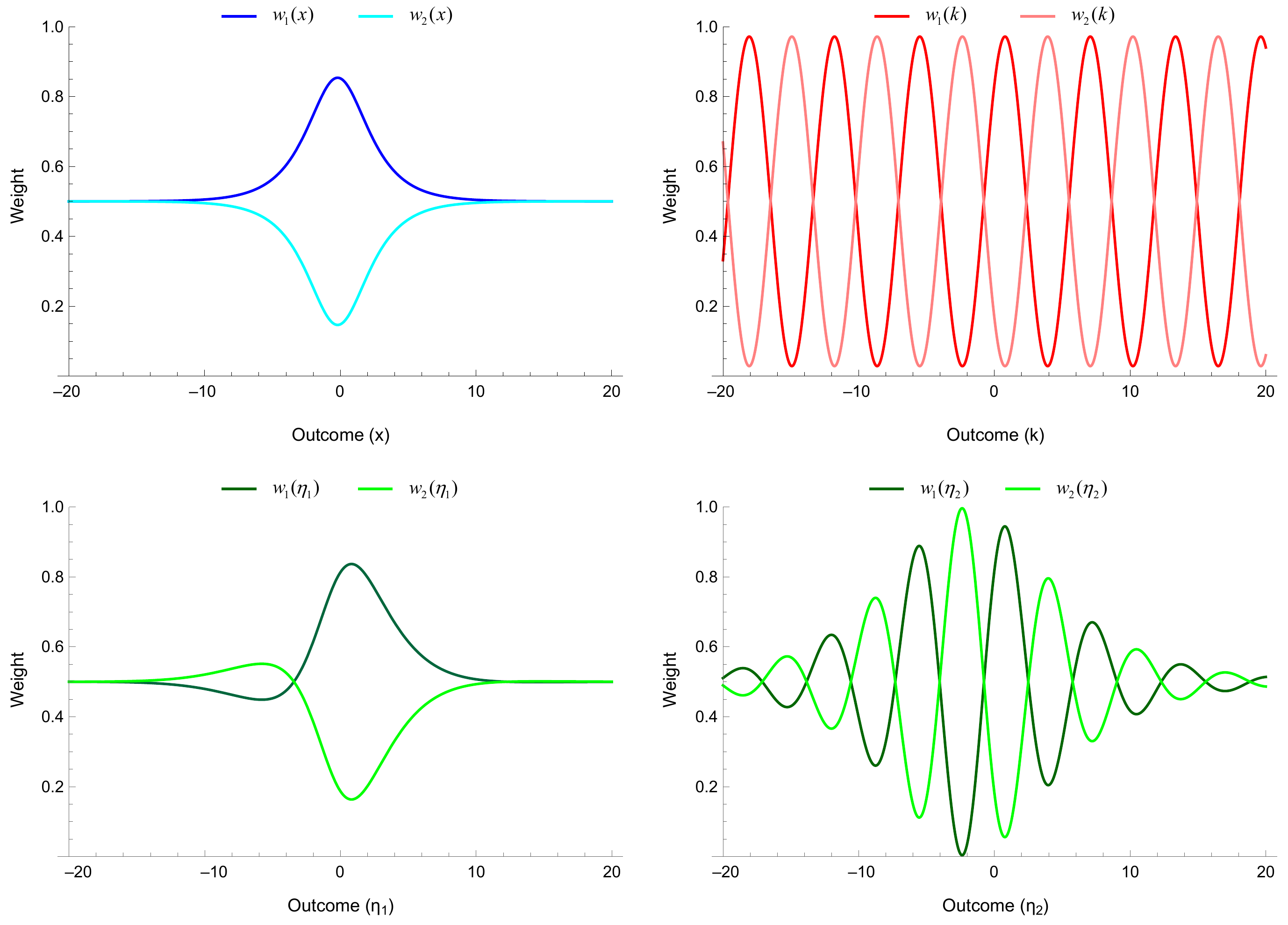}
\par\end{centering}
\caption{Plots of the quantum weights for splitting at $B_2$ depending on Alice's choice of measurement of $\hat{x}$, $\hat{k}$, $\hat{\eta_1}=1\cdot\hat{x}+1\cdot\hat{k}$ or $\hat{\eta_2}=0.1\cdot\hat{x}+1\cdot\hat{k}$ for $g=1$, $\sigma=1$ and $\varphi=\frac{\pi}{2}$.}
\label{fig:3}
\end{figure*}

The most general local hidden variables model should produce both $x$ and $k$ quantum distributions using at most two different hidden functions $f_A(x,k)$ and $f_B(x,k)$ which are independent of Bob's choice $\varphi$, but may take into account whether the particle goes to Alice or Bob at the first beam splitter $B_1$. Each hidden function  $f_A(x,k)\geq0$ or $f_B(x,k)\geq0$ gives the probability density for commitment to the particular outcomes $x$ and $k$ (one of which will be revealed depending on Alice's future choice) provided that the quantum particle is reflected towards Alice or Bob at $B_1$, respectively.
Both hidden functions $f_{A}(x,k)$ and $f_{B}(x,k)$ should be normalized bivariate probability density distributions such that
\begin{eqnarray}
\int_{-\infty}^{\infty}f_{A}(x,k)dx & = & \Phi_{A}(k), \label{model-1}\\
\int_{-\infty}^{\infty}f_{A}(x,k)dk & = & \Phi_{A}(x), \label{model-2}\\
\int_{-\infty}^{\infty}f_{B}(x,k)dx & = & \Phi_{B}(k), \label{model-3}\\
\int_{-\infty}^{\infty}f_{B}(x,k)dk & = & \Phi_{B}(x). \label{model-4}
\end{eqnarray}
From Eqs.~\ref{eq:AB-1} and \ref{eq:AB-2}, it follows that $f_A(x,k)$ and $f_B(x,k)$ should contribute single quantum particles at the second beam splitter $B_2$ respectively with probabilities $\frac{1}{2}$ and $\frac{1}{4}(1-\cos{\varphi})$ because these are the expected quantum probabilities of detecting the single quantum particle if the detectors are placed on Alice's or Bob's arms immediately before~$B_2$.
The splitting at $B_2$ towards $D_1$ or $D_2$ could then occur with at most two different sets of weight functions
\begin{gather}
0\leq w_1^{A}(x,k,\varphi)\leq 1, \label{model-5}\\
0\leq w_1^{B}(x,k,\varphi)\leq 1, \label{model-6}\\
w_2^{A}(x,k,\varphi) = 1 - w_1^{A}(x,k,\varphi), \label{model-7}\\
w_2^{B}(x,k,\varphi) = 1 - w_1^{B}(x,k,\varphi), \label{model-8}
\end{gather}
which could use all the available information including Bob's choice $\varphi$ in addition to the prior commitment to outcomes $x$ and $k$.
To reproduce correctly the quantum distributions $\Phi_{i}\left(x,\varphi\right)$ and $\Phi_{i}\left(k,\varphi\right)$ at $B_2$, the local hidden variables model should satisfy the following integral equations
\begin{align}
\Phi_{i}\left(x,\varphi\right)
 = & ~\int_{-\infty}^{\infty}\left[\frac{1}{2} f_{A}(x,k)\,w_{i}^{A}\left(x,k,\varphi\right)
+\frac{1}{4}\left(1-\cos\varphi\right)f_{B}(x,k)\,w_{i}^{B}\left(x,k,\varphi\right)\right]dk,
\label{eq:LHV-1}\\
\Phi_{i}\left(k,\varphi\right)
 = & ~\int_{-\infty}^{\infty}\left[\frac{1}{2} f_{A}(x,k)\,w_{i}^{A}\left(x,k,\varphi\right)
+\frac{1}{4}\left(1-\cos\varphi\right)f_{B}(x,k)\,w_{i}^{B}\left(x,k,\varphi\right)\right]dx.
\label{eq:LHV-2}
\end{align}
The RHS of Eqs.~\eqref{eq:LHV-1} and \eqref{eq:LHV-2} can be identified with integrals of the Wigner function as follows
\begin{align}
\Phi_{i}\left(x,\varphi\right)= & \int_{-\infty}^{\infty}W_{i}(x,k,\varphi)\,dk,\label{eq:LHV-3}\\
\Phi_{i}\left(k,\varphi\right)= & \int_{-\infty}^{\infty}W_{i}(x,k,\varphi)\,dx.\label{eq:LHV-4}
\end{align}
where the Wigner function is given by
\begin{equation}
W_{i}(x,k) = \frac{1}{\pi}\int_{-\infty}^{\infty}\phi_{i}^{*}(x+z)\phi_{i}(x-z)e^{2\imath kz}dz
= \frac{1}{\pi}\int_{-\infty}^{\infty}\phi_{i}^{*}(k+z)\phi_{i}(k-z)e^{-2\imath xz}dz \label{eq:Wigner}
\end{equation}
A possible way to identify the integrand in the local hidden variables model (Eqs.~\ref{eq:LHV-1} and \ref{eq:LHV-2}) with the Wigner function (Eqs.~\ref{eq:LHV-3} and \ref{eq:LHV-4}) is to allow Alice the choice to measure any linear combination of position and momentum (also called rotated quadrature).

If Alice measures the quantum operator $\hat{\eta}=a\hat{x}+b\hat{k}$, its
eigenstates and eigenvalues are given by
\begin{equation}
\hat{\eta}|\eta\rangle=\eta|\eta\rangle
\end{equation}
The tomogram of the quantum state $|\phi\rangle$ for the rotated quadrature $\eta$ is then given by
the Radon transformation of the Wigner function $W(x,k)$ \cite{Wang2017,deGosson2017}
\begin{equation}
\left|\langle\phi|\eta\rangle\right|^{2}=\int_{-\infty}^{\infty}\int_{-\infty}^{\infty}\delta\left(\eta-ax-bk\right)W(x,k)\,dxdk
\end{equation}
The observed probability distribution of $\eta$ outcomes can be directly related to the position and momentum wavefunctions of the meter \cite{Wang2017}
\begin{equation}
\left|\langle\phi|\eta\rangle\right|^{2}= \frac{1}{2\pi}\sqrt{\frac{1}{\imath ab}}\int_{-\infty}^{\infty}\int_{-\infty}^{\infty}\phi^{*}(k)\phi(x)
e^{-\frac{\imath}{2ab}\left[\eta^{2}-\left(\eta-ax\right)^{2}-\left(\eta-bk\right)^{2}\right]}\,dxdk.
\end{equation}
Then, given that Alice's measurements agree with the quantum distributions, one can use the Wigner Uniqueness theorem by Blass and Gurevich \cite{Blass2015}, which restates the tomographic characterization of Wigner's function due to Jacqueline and Pierre Bertrand~\cite{Bertrand1987}.

\begin{theorem}
(Wigner Uniqueness). Wigner's quasi-distribution defined by Eq.~\eqref{eq:Wigner} is the unique function on the phase space that yields the correct marginal distributions not only for position and momentum but for all their linear combinations. \cite{Blass2015,Bertrand1987}
\end{theorem}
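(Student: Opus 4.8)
The plan is to reduce the uniqueness claim to the injectivity of the Radon transform on phase space. From the tomographic identity already displayed above, the probability distribution of a rotated-quadrature outcome $\eta$ for a measurement of $\hat{\eta}=a\hat{x}+b\hat{k}$ is the Radon transform of $W$,
\[
\left|\langle\phi|\eta\rangle\right|^{2}=\int_{-\infty}^{\infty}\int_{-\infty}^{\infty}\delta\left(\eta-ax-bk\right)W(x,k)\,dx\,dk,
\]
that is, the integral of $W$ over the line $\{ax+bk=\eta\}$ in phase space. Allowing Alice to choose every pair $(a,b)$ therefore supplies the line integrals of $W$ along every direction and every offset; the content of the theorem is that this complete family of one-dimensional projections fixes the two-dimensional function $W$ uniquely.

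First I would pass to Fourier space and invoke the projection-slice (central-slice) theorem. Taking the one-dimensional Fourier transform of the tomogram with respect to $\eta$, I would show that it coincides with the two-dimensional Fourier transform $\widehat{W}$ of $W$ restricted to the ray through the origin in the direction $(a,b)$; the magnitude of $(a,b)$ merely rescales the conjugate frequency variable and is immaterial, so the ratio $a:b$ alone selects the slice. As $(a,b)$ ranges over all directions and the conjugate frequency ranges over $\mathbb{R}$, these slices sweep out the entire frequency plane, so the whole function $\widehat{W}$ is determined by the quadrature distributions.

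Uniqueness then follows at once. If two phase-space functions $W_1$ and $W_2$ share all rotated-quadrature marginals, their difference $W_1-W_2$ has vanishing Radon transform, hence $\widehat{W_1-W_2}$ vanishes on every slice, hence $\widehat{W_1}=\widehat{W_2}$ everywhere, and an inverse two-dimensional Fourier transform gives $W_1=W_2$. Since the genuine Wigner function defined by Eq.~\eqref{eq:Wigner} is known to reproduce the correct quadrature marginals, it is the unique phase-space function with this property. Equivalently, one may write down the explicit Bertrand--Bertrand inversion formula that reconstructs $W$ directly from its tomograms, which renders the uniqueness constructive rather than merely abstract.

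The main obstacle I anticipate is analytic rather than conceptual: the Wigner function and the competing phase-space candidates need not be absolutely integrable, so the Fourier transforms and the slice identity must be justified in the sense of tempered distributions, and the $\delta$-function projections must be handled carefully to license the interchange of integration orders. Pinning down the function class on which the Radon transform is injective — and verifying that the physically relevant $W$ lies in it — is precisely where the real work resides, and is exactly what the Blass--Gurevich treatment secures.
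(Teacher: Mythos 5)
The paper does not actually prove this statement: it is presented as a quoted result, with the proof deferred entirely to the cited works of Blass--Gurevich and Bertrand--Bertrand. Your argument --- identifying the rotated-quadrature marginals with the Radon transform of $W$, invoking the projection-slice theorem to recover the full two-dimensional Fourier transform of $W$ from its one-dimensional slices, and concluding injectivity (with the function-class caveats you rightly flag) --- is correct and is essentially the standard tomographic argument underlying those references, so it supplies precisely the proof the paper omits.
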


\begin{figure*}[t]
\begin{centering}
\includegraphics[width=125mm]{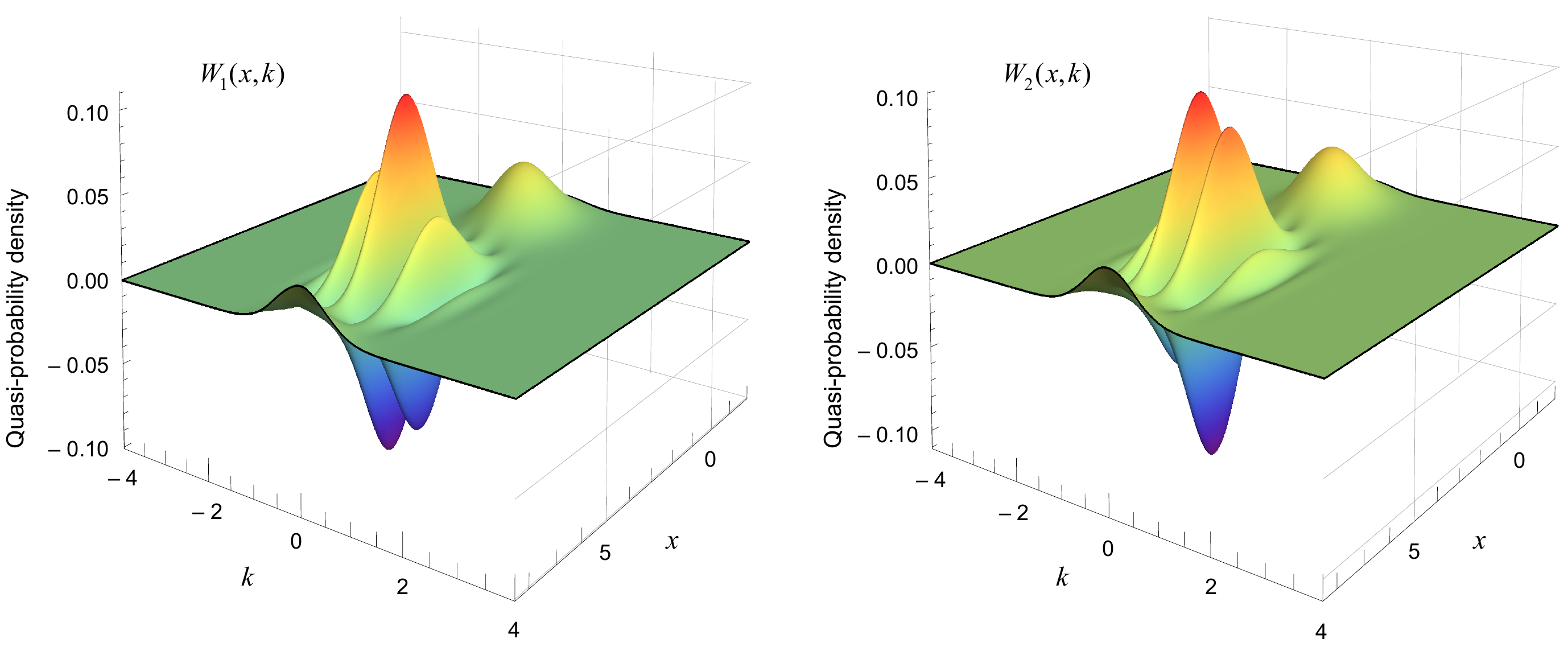}
\par\end{centering}
\caption{Plots of the Wigner functions $W_1(x,k)$ for $D_1$ (left) and $W_2(x,k)$ for $D_2$ (right) contain regions with negative quasi-probability density for $g=10$, $\sigma=1$ and $\varphi=\frac{\pi}{2}$.}
\label{fig:4}
\end{figure*}

Immediately before the second beam splitter $B_{2}$, the Wigner function of the single quantum particles is a probabilistic mixture
\begin{equation}
W_{+}(x,k)=\frac{1}{2}W_{A}(x,k)+\frac{1}{4}\left(1-\cos\varphi\right)W_{B}(x,k),
\end{equation}
of the non-negative Alice's and Bob's default Wigner functions
\begin{eqnarray}
W_{A}(x,k) & = & \frac{1}{\pi}e^{-\frac{\left(x-g\right)^{2}}{2\sigma^{2}}-2k^{2}\sigma^{2}},\\
W_{B}(x,k) & = & \frac{1}{\pi}e^{-\frac{x^{2}}{2\sigma^{2}}-2k^{2}\sigma^{2}}.
\end{eqnarray}

Application of the Wigner Uniqueness theorem identifies $f_{A}(x,k)=W_{A}(x,k)$ and $f_{B}(x,k)=W_{B}(x,k)$.

Similarly, after $B_{2}$, the Wigner Uniqueness theorem constrains the existence of the local hidden variables model through the requirement
\begin{align}
W_{i}(x,k)	=	\frac{1}{2}W_{A}(x,k)\,w_{i}^{A}\left(x,k,\varphi\right)
	 +\frac{1}{4}\left(1-\cos\varphi\right)W_{B}(x,k)\,w_{i}^{B}\left(x,k,\varphi\right) .
\end{align}
Unfortunately, the Wigner functions for the two detectors $D_{1}$ and $D_{2}$ are not non-negative
\begin{align}
W_{1}(x,k) = & \frac{1}{8\pi}e^{-2k^{2}\sigma^{2}}\left[e^{-\frac{x^{2}}{2\sigma^{2}}}{\left(1-\cos\varphi\right)}+2e^{-\frac{\left(x-g\right)^{2}}{2\sigma^{2}}}\right]\nonumber\\
&+\frac{1}{4\pi}e^{-2k^{2}\sigma^{2}}e^{-\frac{\left(2x-g\right)^{2}}{8\sigma^{2}}}\left[\cos\left(gk\right)-\cos\left(gk+\varphi\right)\right]\\
W_{2}(x,k) = & \frac{1}{8\pi}e^{-2k^{2}\sigma^{2}}\left[e^{-\frac{x^{2}}{2\sigma^{2}}}{\left(1-\cos\varphi\right)}+2e^{-\frac{\left(x-g\right)^{2}}{2\sigma^{2}}}\right]\nonumber\\
&-\frac{1}{4\pi}e^{-2k^{2}\sigma^{2}}e^{-\frac{\left(2x-g\right)^{2}}{8\sigma^{2}}}\left[\cos\left(gk\right)-\cos\left(gk+\varphi\right)\right]
\end{align}
The latter two Wigner functions exhibit negative regions for certain values of the setup parameters (see Fig.~\ref{fig:4}), which directly contradicts the two main assumptions of the local hidden variables model in Eqs.~\ref{eq:LHV-1} and~\ref{eq:LHV-2}, namely the requirement for valid probability densities, $f_{A}(x,k)\geq 0$, $f_{B}(x,k)\geq 0$, and valid weights for splitting, $0\leq w_{i}^{B}\left(x,k,\varphi\right)\leq 1$, $0\leq w_{i}^{A}\left(x,k,\varphi\right)\leq 1$.

The presented argument against the existence of a local hidden variables model illustrates the utility of Wigner function negativity as a witness of non-classical behavior \cite{Bertrand1987,Banaszek1999,Kenfack2004,Spekkens2008,Abramsky2011,Abramsky2014,Blass2015,Blass2018,Siyouri2016,Weinbub2018}. However, it should be noted that the uniqueness of the Wigner function depends on assumptions that may be rejected by the proponents of local hidden variable models:

On the one hand, giving Alice more choices for measurements of rotated quadratures appears to utilize the tomographic uniqueness of the Wigner function. However, this is helpful only if the hidden variables are somehow constrained to be just two, namely $x$ and $k$. If there is an infinite number of variables $\eta_{a,b}$, one for each linear superposition of position and momentum $\hat{\eta}_{a,b}=a\hat{x}+b\hat{k}$, then the hidden probability density functions will introduce infinitely multiple integrals.
Development of further mathematical arguments, for or against the existence of such an infinitely complex local hidden variables model, may be intractable.

On the other hand, Stenholm \cite{Stenholm1980} points out that the Wigner function is also uniquely defined if (i) it gives the correct marginal distributions for measurement of position and momentum, (ii) Galilean invariance and symmetry in position and momentum are valid, and (iii) for a free particle the classical equation of motion ensues. However, proponents of a local hidden variables model may object that at least one of the latter two conditions is violated by their model.

In essence, to determine whether the interferometric setup manifests quantum nonlocality it would be desirable to have a general argument for or against the simultaneous compatibility of Eqs.~\eqref{model-1}--\eqref{eq:LHV-2}.

Without changing the mathematical constraints on the local hidden variables model, we can rewrite Eqs.~\eqref{eq:LHV-1} and \eqref{eq:LHV-2} as follows:
\begin{eqnarray}
\Phi_{i}\left(x,\varphi\right) & = & \int_{-\infty}^{\infty}f_{+}(x,k,\varphi)\,w_{i}^{+}\left(x,k,\varphi\right)dk,\\
\Phi_{i}\left(k,\varphi\right) & = & \int_{-\infty}^{\infty}f_{+}(x,k,\varphi)\,w_{i}^{+}\left(x,k,\varphi\right)dx.
\end{eqnarray}
where the local hidden variables density before $B_2$ is
\begin{equation}
f_{+}(x,k,\varphi) = \frac{1}{2}f_{A}(x,k)+\frac{1}{4}\left(1-\cos\varphi\right)f_{B}(x,k)
\end{equation}
and the effective weight for splitting at $B_2$ is
\begin{equation}
w_{i}^{+}\left(x,k,\varphi\right) =  \frac{\frac{1}{2}f_{A}(x,k)}{f_{+}(x,k,\varphi)}\,w_{i}^{A}\left(x,k,\varphi\right)
 +\frac{\frac{1}{4}\left(1-\cos\varphi\right)f_{B}(x,k)}{f_{+}(x,k,\varphi)}w_{i}^{B}\left(x,k,\varphi\right)
\end{equation}
If Eqs.~\eqref{model-5}--\eqref{model-8} hold, it follows that $0\leq w_{i}^{+}\left(x,k,\varphi\right) \leq 1$.

In an attempt to construct explicit local hidden variables model, let us suppose that the hidden probability density function is factorizable
\begin{equation}
f_{+}(x,k,\varphi)=\frac{1}{\textrm{Prob}\left(D_{+}\right)}\Phi_{+}\left(x,\varphi\right)\Phi_{+}\left(k,\varphi\right)
\end{equation}
where $\textrm{Prob}\left(D_{+}\right)=\textrm{Prob}\left(D_{1}\right)+\textrm{Prob}\left(D_{2}\right)=\frac{1}{4}\left(3-\cos\varphi\right)$.
Then, the following system of integral equations arises
\begin{eqnarray}
w_{i}\left(x,\varphi\right) & = & \frac{1}{\textrm{Prob}\left(D_{+}\right)} \int_{-\infty}^{\infty}\Phi_{+}\left(k,\varphi\right)\,w_{i}^{+}\left(x,k,\varphi\right)dk, \label{factorize-1}\\
w_{i}\left(k,\varphi\right) & = & \frac{1}{\textrm{Prob}\left(D_{+}\right)} \int_{-\infty}^{\infty}\Phi_{+}\left(x,\varphi\right)\,w_{i}^{+}\left(x,k,\varphi\right)dx. \label{factorize-2}
\end{eqnarray}
In contrast to Fredholm theory, where the kernel is known and is solved for unknown function \cite{Kanwal1997,Wazwaz2011}, here we know the functions $w_{i}\left(x,\varphi\right)$, $w_{i}\left(k,\varphi\right)$, $\Phi_{+}\left(x,\varphi\right)$ and $\Phi_{+}\left(k,\varphi\right)$ (see \ref{app:distributions} with Eqs.~\eqref{Phi+x}, \eqref{Phi+k}, \eqref{w-x} and \eqref{w-k}) but the kernel is unknown because it is the hidden weight function $w_{i}^{+}\left(x,k,\varphi\right)$. What is needed is a mathematical method for solving this system of equations and checking that $0\leq w_{i}^{+}\left(x,k,\varphi\right) \leq 1$. Doing so may prove the existence of a local hidden variables model.

Assuming further factorizability of the hidden weight kernel for $D_{1}$ as $C_{1}w_{1}(x)w_{1}(k)$, followed by substituion in Eq.~\eqref{factorize-1} reveals the constant $C_1$. Thus, one solution of the system given by Eqs.~\eqref{factorize-1} and \eqref{factorize-2} is
\begin{equation}
w_{1}^{+}\left(x,k,\varphi\right) = \frac{\textrm{Prob}\left(D_{+}\right)}{\textrm{Prob}\left(D_{1}\right)}w_{1}(x)w_{1}(k);\quad
w_{2}^{+}\left(x,k,\varphi\right) = 1 - w_{1}^{+}\left(x,k,\varphi\right). \label{solution-1}
\end{equation}
Similarly, assuming factorizability of the hidden weight kernel for $D_{2}$ as $C_{2}w_{2}(x)w_{2}(k)$, reveals a second solution
\begin{equation}
w_{1}^{+}\left(x,k,\varphi\right) = 1 - w_{2}^{+}\left(x,k,\varphi\right); \quad
w_{2}^{+}\left(x,k,\varphi\right) =\frac{\textrm{Prob}\left(D_{+}\right)}{\textrm{Prob}\left(D_{2}\right)}w_{2}(x)w_{2}(k). \label{solution-2}
\end{equation}
Unfortunately, neither of the above two solutions is a valid local hidden variables model, because the weights in Eqs.~\eqref{solution-1} and \eqref{solution-2} may take on negative values or exceed 1 for certain choices of $\varphi$ by Bob.

The presented analysis illustrates that finding a solution such that $0\leq w_{i}^{+}\left(x,k,\varphi\right) \leq 1$, or conversely proving that no such solution exists in order to decide whether quantum weak values are nonlocal or not, may be a much more challenging problem than anticipated.

\section{Discussion}

Nonlocal hidden variables models experience no difficulties in reproducing the quantum outcomes from weak measurements. For example, the guiding equation of the actual (hidden) particle positions~$\mathbf{Q}_k$ for a quantum system with $N$ particles in de Broglie--Bohm model \cite{Bohm1,Bohm2,Allori2004,Durr2009}
is given by
\begin{equation}
\frac{d\mathbf{Q}_k}{dt} = \frac{\hbar}{m_k} \textrm{Im} \left[
{\frac{\psi^* \nabla_k \psi}{\psi^* \psi}}\right] (\mathbf{Q}_1 ,\ldots
,\mathbf{Q}_N)
\end{equation}
where the particle index $k$ runs from $1$ to $N$, $m_k$ is the mass of the $k$-th particle and $\nabla_k = (\frac{\partial}{\partial x_k},\frac{\partial}{\partial y_k},\frac{\partial}{\partial z_k})$ is the gradient with respect to the generic coordinates $\mathbf{q}_k = (x_k,y_k,z_k)$ of the $k$-th particle in the Schr\"{o}dinger picture ($\frac{d\mathbf{q}_k}{dt}=0$).
This guiding equation uses directly the nonlocal information contained in the many-body wavefunction $\psi(\mathbf{q}_1 ,\ldots ,\mathbf{q}_N)$, which solves the Schr\"{o}dinger equation
\begin{equation}
\imath\hbar\frac{\partial \psi}{\partial t} = \hat{H} \psi .
\end{equation}
Consequently, de Broglie--Bohm model is able to reproduce correctly the results obtained in the square nested Mach--Zehnder setup. This is only natural because the wavefunction $\psi$ from the Schr\"{o}dinger equation is already inbuilt as a nonlocal physical $\psi$-field \cite{Bohm1,Bohm2} into the formulation of de Broglie--Bohm model at a fundamental axiomatic level.

Alternatively, from the perspective of quantum information theory, there is no need for additional hidden variables. Instead, the quantum wavefunction $\psi$ is considered to be a complete, nonlocal description of reality. Furthermore, fundamental quantum no-go theorems establish that the quantum wavefunction $\psi$ of unknown quantum state cannot be cloned \cite{Wootters1982} and is not observable \cite{Busch1997}, which effectively makes $\psi$ itself analogous to a nonlocal hidden variable. If we possess multiple clones of a known quantum state, however, it is possible to use weak values extracted from weak measurements to perform quantum tomography in order to reconstruct the quantum state $\psi$ \cite{Wu2013,Kim2018}. Thus, the weak values appear to be robust physical properties of pre- and postselected quantum systems \cite{Vaidman2017} supporting the ontological interpretation of the quantum wave function $\psi$ \cite{Pusey2012,Leifer2014}.

In this work we have studied whether the weak values of quantum histories could be exploited to experimentally demonstrate nonlocality of single quanta. Based on the observation that the complex-valued weak values of quantum histories (Eq.~\ref{eq:weak}) correspond to relative quantum probability amplitudes that depend on all paths coherently explored in a quantum superposition by the quantum system \cite{Georgiev2018}, we have designed an interferometric setup in which the weak measuring device exhibits pointer distributions affected by both the real and imaginary part of the weak value (cf. Refs.~\refcite{Georgiev2018,Aharonov1988,Dressel2015,Jozsa2007,Aharonov2015,Aharonov2016b,Pusey2014,Cohen2017,Aharonov2018}).
Attempting to utilize the apparently nonlocal dependence of the weak values of quantum histories on spacelike separated choices, we have allowed Bob's choice $\varphi$ to be performed outside the light cone of the final projective measurement of $M$ by Alice. As expected, the generated quantum distributions by $M$ are functionally dependent on both Bob's and Alice's actions and this dependence needs to be replicated through local commitment of $M$ to particular $x$ or $k$ outcomes.

Importantly, the quantum mechanical predictions are the same regardless of how close to the second beam splitter $B_2$ the weak measuring device $M$ and the two detectors $D_1$ and $D_2$ are positioned.
The spatial arrangement of all these physical devices, however, is of great significance for the local hidden variables model.
If the distance from $B_2$ to $M$ is much shorter than the distance from $B_2$ to $D_1$ and $D_2$, then it would be possible for hidden signaling to travel backwards from $B_2$ to $M$ and overwrite any incorrect distribution for the pointer observable with the correct quantum distribution.
In contrast, if the distance from $B_2$ to $M$ is much larger than the distance from $B_2$ to $D_1$ and $D_2$, then it would be impossible for hidden signaling to travel at luminal speed backwards from $B_2$ and reach $M$ in time to overwrite an incorrect distribution for the pointer observable with the correct quantum distribution.
Thus, contextuality of local hidden variable models needs to operate in the absence of complete information about the spacelike separated actions by Alice and Bob. This imposes mathematical constraints on the local hidden variables model expressed by the system of equations~\eqref{model-1}--\eqref{eq:LHV-2}, which need to be obeyed for successful replication of the correct quantum distributions.

Our analysis based on certain techniques originally developed for use in phase space formulation of quantum mechanics showed that at the second beam splitter $B_2$ the task of the local hidden variables model is to split a quantum state characterized by a positive Wigner function into two quantum states with non-positive Wigner functions. This implies that under conditions for which the Wigner function is uniquely determined, such as those imposed by Blass and Gurevich \cite{Blass2015} or Stenholm \cite{Stenholm1980}, there exists no corresponding hidden variables model.
An alternative attempt to solve the system of equations~\eqref{model-1}--\eqref{eq:LHV-2} through postulation of simultaneous factorizability of both the hidden probability densities and weights of splitting, revealed two solutions, which were also not admissible as local hidden variables due to occurrence of negative weights of splitting.
Further progress on establishing the existence of a valid local hidden variables model of the given interferometric setup may require development of new mathematical techniques for solving integral equations with unknown nonfactorizable kernels.
While our study leaves that latter question as an open problem, we believe that the detailed quantum analysis of the presented experiment illustrates well the pitfalls encountered in relating quantum nonlocality with nonlocal interaction of the wavefunction $\psi$ of single quanta with spacelike separated devices.
Our application of Feynman's sum over histories approach also highlights the contextual nature of quantum weak values and hopefully provides the weak value community with a useful mathematical research tool.

\section*{Acknowledgements}

We wish to thank Avshalom Elitzur, Ebrahim Karimi and Lev Vaidman for helpful comments and discussions.

\appendix
\label{app}

\section{Nonlocal interpretation versus nonlocal ontology}
\label{app:nonlocal}

The quantum wavefunction $\psi$, which solves the Schr\"{o}dinger equation,
is subject to physical interpretation. In 1927, Einstein formulated
two different \emph{conceptions} of quantum theory that describe different
physical ontologies \cite{Einstein1927}. Later
work by John Bell showed that Einstein's two conceptions are not empirically
indistinguishable \emph{interpretations} of quantum theory, but are
empirically distinguishable \emph{models} that predict different
experimental outcomes for suitably designed experiments \cite{Bell1964,Bell1966}.

\emph{Conception 1}: The quantum wavefunction $\psi$ does not correspond
to a single particle, but to an \emph{ensemble} of particles extended
in space. The theory gives no information about individual processes,
but only about the ensemble of infinitely many elementary processes.
According to this statistical point of view, $|\psi|^{2}$ expresses
the probability that a particular particle from the ensemble exists
at the point considered, for example at a given point on a screen.
Thus, quantum theory is \emph{incomplete}, but can be completed by
\emph{local hidden variables} that localize individual particles during
their propagation.

\emph{Conception 2}: The quantum wavefunction $\psi$ corresponds
to \emph{single} particles and provides a \emph{complete} description
of individual processes. According to this ontological point of view,
$|\psi|^{2}$ expresses the probability that at a given instant the
same particle is present at a given point. Thus, quantum theory refers
to an individual process and claims to describe everything that is
governed by predefined laws. If $|\psi|^{2}$ is regarded as the probability
that a given particle is found at a certain point at a given time,
it could happen that the same elementary process produces an action
in two or several places on the screen, namely, two particles are
detected instead of one. Thus, to satisfy the conservation of energy
a \emph{nonlocal} mechanism of action at a distance is required, which
prevents the wave, continuously distributed in space, from producing
a simultaneous action in two places on the screen, namely, the detection of the
particle somewhere on the screen leads to nonlocal collapse of the
wavefunction manifested as nonlocal zeroing of the probability for
detecting the particle elsewhere.

Einstein rejected \emph{Conception 2} because for him it contradicted the
theory of general relativity. He believed that quantum nonlocality
is an interpretation-dependent artifact that is a consequence of the
assumed completeness of quantum theory. Following the work of Bell,
however, now we know that \emph{Conception 1} endorsing local hidden
variables is not an interpretation of quantum theory, but a physical
model that cannot reproduce all quantum mechanical experiments.

In this work, we studied whether quantum physics involves nonlocal
ontology for single particles that is not an artifact of the nonlocal
interpretation of the quantum wavefunction $\psi$.
This required usage of a general local hidden variables model that
is able to replicate the experimental outcomes of Einstein's experiment
and Wheeler's delayed choice experiment. Next, we show that the main
characteristics of the general local hidden variables model are
constrained by the requirement to replicate the experimental outcomes
of the latter two experiments.

\section{Local hidden variables model of Einstein's experiment}
\label{app:Einstein}

\begin{figure}[t]
\begin{centering}
\includegraphics[width=80mm]{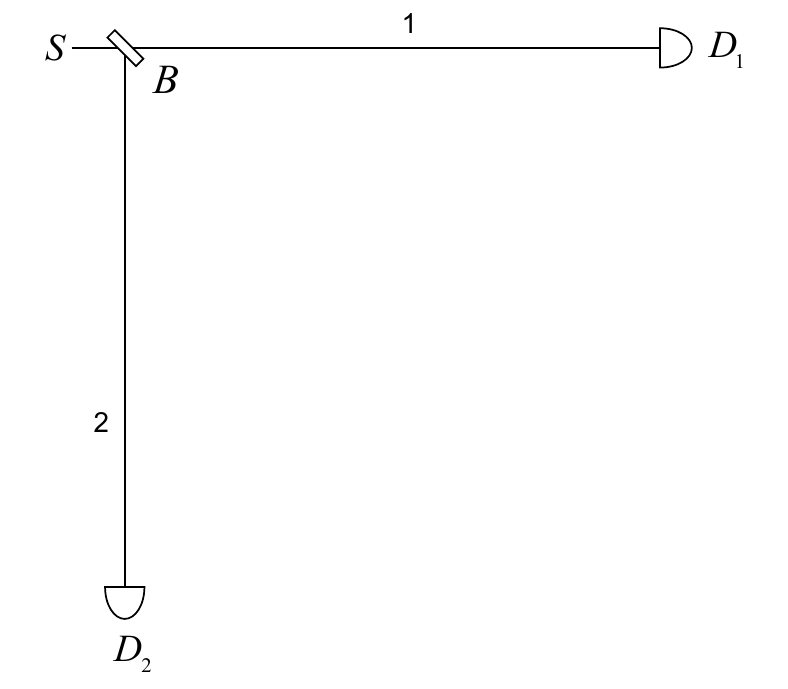}
\par\end{centering}
\caption{\label{fig:Einstein}Einstein's experiment constructed with the use
of a single-photon source $S$, a beam splitter $B$ and two distant detectors
$D_{1}$ and $D_{2}$, which are never found to click together at
the same time as required by the conservation of energy.}
\end{figure}

The quantum mechanical description of Einstein's experiment (Fig.~\ref{fig:Einstein}) involves
unitary evolution $\hat{\mathcal{T}}_{f,i}$ of the initial quantum
state $|S\rangle$ into a quantum superposition of two distant final
states $|D_{1}\rangle$ and $|D_{2}\rangle$ as follows
\begin{equation}
\hat{\mathcal{T}}_{f,i}|S\rangle=\frac{1}{\sqrt{2}}\left(|D_{1}\rangle+\imath|D_{2}\rangle\right)
\end{equation}
Upon measurement in position basis at time $t_{f}$, the final superposed
state collapses in one of the two final states $|D_{1}\rangle$ and
$|D_{2}\rangle$ with probabilities given by the Born rule
\begin{equation}
\begin{cases}
\frac{1}{\sqrt{2}}\left(|D_{1}\rangle+\imath|D_{2}\rangle\right)\to|D_{1}\rangle & \textrm{with Prob}(D_{1})=\frac{1}{2}\\
\frac{1}{\sqrt{2}}\left(|D_{1}\rangle+\imath|D_{2}\rangle\right)\to|D_{2}\rangle & \textrm{with Prob}(D_{2})=\frac{1}{2}
\end{cases}
\end{equation}
The wavefunction collapse at $t_{f}$ appears to be a nonlocal event,
because the conservation of energy requires the probabilities of firing
of the two distant detectors $D_{1}$ and $D_{2}$ to be dependent,
namely, $D_{1}$ never clicks if $D_{2}$ does, and vice versa. In
fact, accepting the quantum mechanical description of individual quantum
processes as complete would make all quantum measurements to appear
nonlocal at the time of measurement.

As noticed by Einstein, however, it is possible to construct a local
hidden variables model by shifting the wavefunction collapse to an
earlier time $t_{b}$ when the quantum particle passes through the beam splitter
$B$. Einstein's strategy is to explain the observed correlations
between distant measurement outcomes through past local selections
at points of bifurcation of quantum trajectories. The passage through
the beam splitter is a local event, hence the quantum particle should
be able to perform locally a \emph{weighted random choice} $\mathcal{R}$
with probabilities given by a statistical distribution $\Lambda$
that is consistent with the predictions of quantum mechanics. Thus,
local ontology could be restored in Einstein's experiment only if
the local hidden variables model is endowed with the following characteristic
property:

\emph{Property 1}: Quantum particles possess a probabilistic mechanism
that performs weighted random choice $\mathcal{R}$ with
probabilities given by any statistical distribution $\Lambda$.
This would allow quantum particles to select a single path at points
of bifurcation of quantum trajectories.

For the present purposes, we could grant the existence of a genuinely
indeterministic random choice $\mathcal{R}$, which is due to a true
Random Number Generator (RNG) as opposed to deterministic Pseudo-Random
Number Generator (PRNG). Thus, the general local hidden
variables model considered here could possess an additional resource
that is unavailable to deterministic classical physical models.

\section{Local hidden variables model of Wheeler's delayed choice experiment}
\label{app:Wheeler}

\begin{figure}[t]
\begin{centering}
\includegraphics[width=80mm]{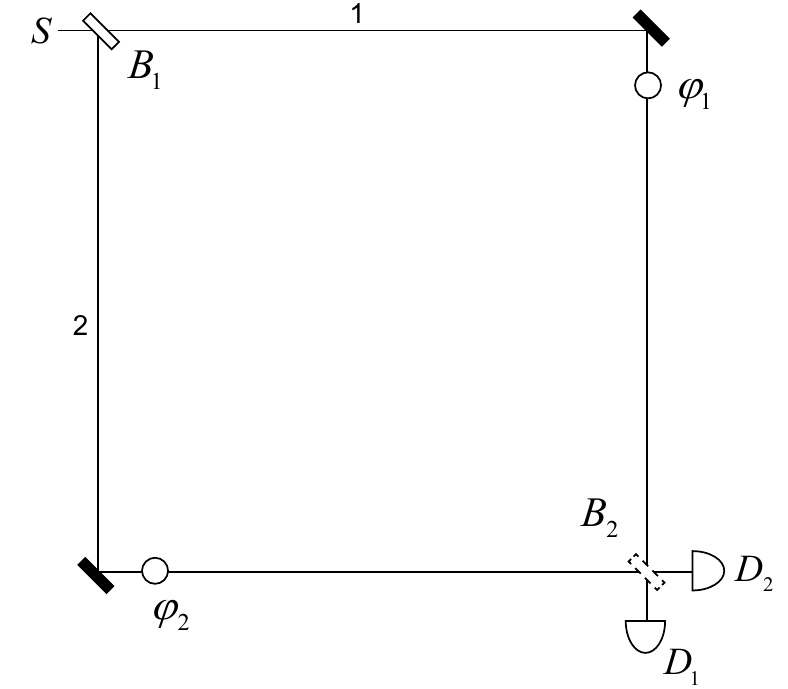}
\par\end{centering}
\caption{\label{fig:Wheeler}Wheeler's delayed choice experiment performed with the use of a Mach--Zehnder interferometer.
The presence of the second beam splitter~$B_{2}$ is decided
in a delayed fashion only after the particle has passed the first
beam splitter~$B_{1}$.}
\end{figure}

Quantum interference effects are sensitive to all physical influences
exerted on the available alternative quantum trajectories for the
particle. The quantum mechanical description of Wheeler's delayed
choice experiment (Fig.~\ref{fig:Wheeler}) involves unitary evolution $\hat{\mathcal{T}}_{f,i}$
of the initial quantum state $|S\rangle$ into a quantum superposition
of two alternative trajectories $|1\rangle$ and $|2\rangle$ that
may or may not be recombined by a second beam splitter $B_{2}$ whose
insertion is decided in a delayed fashion only after the particle
has passed the first beam splitter $B_{1}$.

In the absence of the second beam splitter $B_{2}$, the unitary
evolution $\hat{\mathcal{T}}_{f,i}$ of the initial quantum state
$|S\rangle$ leads to a quantum superposition of two final detector
states $|D_{1}\rangle$ and $|D_{2}\rangle$ as follows
\begin{equation}
\hat{\mathcal{T}}_{f,i}|S\rangle=\frac{1}{\sqrt{2}}\left(\imath e^{\imath\varphi_{1}}|D_{1}\rangle-e^{\imath\varphi_{2}}|D_{2}\rangle\right)
\end{equation}
The particle arrives at each of the two detectors with equal probability of $\frac{1}{2}$. Because the quantum amplitude reaching detector
$D_{1}$ or $D_{2}$ picks the extra phase by only one of the phase shifters $\varphi_{1}$ or $\varphi_{2}$, it appears that the
particle has traveled through the corresponding interferometer arm~$1$ or $2$.

In the presence of the second beam splitter $B_{2}$, however,
the unitary evolution $\hat{\mathcal{T}}_{f,i}$ of the initial quantum
state $|S\rangle$ leads to quantum interference of the phase shifts
picked up from both interferometer arms
\begin{equation}
\hat{\mathcal{T}}_{f,i}|S\rangle=\frac{1}{2}\left[\imath\left(e^{\imath\varphi_{1}}-e^{\imath\varphi_{2}}\right)|D_{1}\rangle-\left(e^{\imath\varphi_{1}}+e^{\imath\varphi_{2}}\right)|D_{2}\rangle\right]
\end{equation}
The particle arrives at $D_{1}$ with probability of $\frac{1-\cos(\varphi_{1}-\varphi_{2})}{2}$
and at $D_{2}$ with probability of $\frac{1+\cos(\varphi_{1}-\varphi_{2})}{2}$.
Clearly, the particle needs to have the information about
the values of both phase shifters $\varphi_{1}$ and $\varphi_{2}$
in order to produce the observable quantum outcomes. Because according
to \emph{Property 1} the particle can take probabilistically only
a single trajectory, the introduction of hidden signaling is necessary
in order to reproduce the experimental outcomes in Wheeler's delayed
choice experiment. Thus, the local hidden variables model should also
be endowed with the following characteristic properties:

\emph{Property 2}: Quantum particles could explore available alternative
trajectories with the use of hidden signals that travel at most at
luminal speed.

\emph{Property 3}: Quantum particles posses memory and could execute
a list of contextual instructions, which allow usage of new information
obtained through hidden signals.

For the present purposes, we could grant the existence of any type
of hidden signals (particles or waves) provided that the hidden signals cannot be directly observed and do not generate physical particles upon quantum measurement (so called ``empty waves''
in de Broglie--Bohm model). Furthermore, we could grant an arbitrarily
large memory and processing power available to the quantum particle
for making use of the hidden signals. The important constraints are
that the velocity of the hidden signals is not superluminal and distant
changes in the physical system do not have an instantaneous effect
upon the current list of contextual instructions executed by the quantum
particle.

With the above properties, the general local hidden variables model
easily replicates the measurement outcomes in Wheeler's delayed choice
experiment: The quantum particle probabilistically takes a single
path in the interferometer and picks up locally the phase shift encountered, $\varphi_{1}$
or $\varphi_{2}$. When the particle arrives at $B_{2}$,
it receives a hidden signal about the phase shift on the alternative
path, $\varphi_{2}$ or $\varphi_{1}$. Lastly, if $B_2$ is absent, the particle goes to $D_1$ if it traveled along path~$1$ and to $D_2$ if it traveled along path~$2$. Else, if $B_2$ is present, the particle uses its memory and processing power to
compute the quantum interference patterns $\textrm{Prob}(D_1)=\frac{1-\cos(\varphi_{1}-\varphi_{2})}{2}$ and $\textrm{Prob}(D_2)=\frac{1+\cos(\varphi_{1}-\varphi_{2})}{2}$, after which with the use of its weighted random choice mechanism $\mathcal{R}$ selects
one of the two detectors $D_{1}$ or $D_{2}$ in accordance with the
computed quantum probabilities.

\section{Analytic derivation of the final meter wavefunction}
\label{app:analytic}

To solve for the exact final meter state analytically, we use the formal definition of the matrix exponential of the interaction Hamiltonian as an infinite power series \cite{Kofman2012,Koike2011,Nakamura2012}. Once the complex-valued weak value $A_w$ is generated using inner products of the projection operator $\hat{A}$, the obtained infinite power series could be transformed into a sum of two exponentials. This mathematical procedure elucidates the origin of the weak values and is well-defined for any strength of the coupling parameter~$g$.

For the post-selected system in a final state $|\psi_{f}\rangle$,
the projected (not normalized) final meter wavefunction in position
basis can be expanded as follows
\begin{align}
\langle x|\phi_{f}\rangle = & \langle\psi_{f}|\hat{\mathcal{T}}_{f,m}\left(e^{-\imath g\hat{A}\otimes\hat{k}}\right)\hat{\mathcal{T}}_{m,i}|\psi_{i}\rangle\phi_{0}(x)\nonumber \\
 = & \langle\psi_{f}|\hat{\mathcal{T}}_{f,m}\left[\sum_{n=0}^{\infty}\frac{1}{n!}\left(-\imath g\hat{A}\otimes\hat{k}\right)^{n}\right]\hat{\mathcal{T}}_{m,i}|\psi_{i}\rangle\phi_{0}(x)\nonumber \\
 = & \langle\psi_{f}|\hat{\mathcal{T}}_{f,m}\left(1-\imath g\hat{A}\otimes\hat{k}-\frac{1}{2!}g^{2}\hat{A}^{2}\otimes\hat{k}^{2}+\frac{1}{3!}\imath g^{3}\hat{A}^{3}\otimes\hat{k}^{3}+\ldots\right)\hat{\mathcal{T}}_{m,i}|\psi_{i}\rangle\phi_{0}(x).
\end{align}
For the particular case of a projection operator represented by an idempotent matrix, $\hat{A}^{2}=\hat{A}$,
after expressing the wavenumber operator in position basis $\hat{k}=-\imath\frac{\partial}{\partial x}$, we obtain
\begin{equation}
\phi_{f}(x)=\langle\psi_{f}|\hat{\mathcal{T}}_{f,i}|\psi_{i}\rangle\left[1-A_{w}\left(g\frac{\partial}{\partial x}-\frac{1}{2!}\left(g\frac{\partial}{\partial x}\right)^{2}+\frac{1}{3!}\left(g\frac{\partial}{\partial x}\right)^{3}-\ldots\right)\right]\phi_{0}(x)
\end{equation}
where $A_{w}$ is the weak value defined in Eq.~\eqref{eq:weak}. Then we employ the fact that the exponential of the differential operator acts as a translation operator
\begin{align}
\phi_{f}(x) = & \langle\psi_{f}|\hat{\mathcal{T}}_{f,i}|\psi_{i}\rangle\left\{ \left(1-A_{w}\right)+A_{w}\left[1-g\frac{\partial}{\partial x}+\frac{1}{2!}\left(g\frac{\partial}{\partial x}\right)^{2}-\frac{1}{3!}\left(g\frac{\partial}{\partial x}\right)^{3}+\ldots\right]\right\} \phi_{0}(x)\nonumber \\
  = & \langle\psi_{f}|\hat{\mathcal{T}}_{f,i}|\psi_{i}\rangle\left[\left(1-A_{w}\right)+A_{w}\sum_{n=0}^{\infty}\frac{1}{n!}\left(-g\frac{\partial}{\partial x}\right)^{n}\right]\phi_{0}(x)\nonumber \\
  = & \langle\psi_{f}|\hat{\mathcal{T}}_{f,i}|\psi_{i}\rangle\left[\left(1-A_{w}\right)+A_{w}e^{-g\frac{\partial}{\partial x}}\right]\phi_{0}(x)\nonumber\\
	= & \langle\psi_{f}|\hat{\mathcal{T}}_{f,i}|\psi_{i}\rangle\left[\left(1-A_{w}\right)\phi_{0}(x)+A_{w}\phi_{0}(x-g)\right]\nonumber \\
  = & \langle\psi_{f}|\hat{\mathcal{T}}_{f,i}|\psi_{i}\rangle\left(2\pi\sigma^{2}\right)^{-\frac{1}{4}}\left[\left(1-A_{w}\right)e^{-\frac{x^{2}}{4\sigma^{2}}}+A_{w}e^{-\frac{(x-g)^{2}}{4\sigma^{2}}}\right]
\end{align}
The presented mathematical technique is also able to generate exact analytical results for observables represented by an involutory matrix, $\hat{A}^{2}=\hat{I}$, however, it may not lead to simple expressions in a closed form for an arbitrary observable $\hat{A}$ \cite{Kofman2012,Koike2011,Nakamura2012}.

\section{Quantum distributions of the pointer for different post-selections}
\label{app:distributions}

The correct quantum distributions for post-selected detectors $D_1$ or $D_2$ that need to be reproduced by the local hidden variables model can be obtained from Eqs.~\eqref{eq:Phi-f-x} and~\eqref{eq:Phi-f-k} with the use of the corresponding weak values in Eqs.~\eqref{eq:Aw-D1} and~\eqref{eq:Aw-D2} as follows
\begin{align}
\Phi_{1}(x)  = & \frac{1}{8}\left(2\pi\sigma^{2}\right)^{-\frac{1}{2}}e^{-\frac{x^{2}}{2\sigma^{2}}}\left[2e^{\frac{x^{2}-\left(x-g\right)^{2}}{2\sigma^{2}}}+\left(1+2e^{\frac{x^{2}-\left(x-g\right)^{2}}{4\sigma^{2}}}\right)\left(1-\cos\varphi\right)\right],\label{Phi-1-x}\\
\Phi_{2}(x)  = & \frac{1}{8}\left(2\pi\sigma^{2}\right)^{-\frac{1}{2}}e^{-\frac{x^{2}}{2\sigma^{2}}}\left[2e^{\frac{x^{2}-\left(x-g\right)^{2}}{2\sigma^{2}}}+\left(1-2e^{\frac{x^{2}-\left(x-g\right)^{2}}{4\sigma^{2}}}\right)\left(1-\cos\varphi\right)\right],\label{Phi-2-x}\\
\Phi_{1}(k)  = & \frac{1}{8}\sigma\sqrt{\frac{2}{\pi}}e^{-2k^{2}\sigma^{2}}\left[3-\cos\varphi+2\cos\left(gk\right)-2\cos\left(\varphi+gk\right)\right],\label{Phi-1-k}\\
\Phi_{2}(k)  = & \frac{1}{8}\sigma\sqrt{\frac{2}{\pi}}e^{-2k^{2}\sigma^{2}}\left[3-\cos\varphi-2\cos\left(gk\right)+2\cos\left(\varphi+gk\right)\right],\label{Phi-2-k}\\
\Phi_{1}(\eta)  = & \frac{\sigma}{4\sqrt{2\pi(b^{2}+4a^{2}\sigma^{4})}}\left\{ 2e^{-\frac{2(\eta-ag)^{2}\sigma^{2}}{b^{2}+4a^{2}\sigma^{4}}}+e^{-\frac{2\eta^{2}\sigma^{2}}{b^{2}+4a^{2}\sigma^{4}}}(1-\cos\varphi)\right\}\nonumber\\
&+\frac{\sigma}{\sqrt{8\pi(b^{2}+4a^{2}\sigma^{4})}}\left\{e^{-\frac{[\eta^{2}+(\eta-ag)^{2}]\sigma^{2}}{b^{2}+4a^{2}\sigma^{4}}}\left[\cos{\scriptstyle \left(\frac{bg(2\eta-ag)}{2(b^{2}+4a^{2}\sigma^{4})}\right)}-\cos{\scriptstyle \left(\frac{bg(2\eta-ag)}{2(b^{2}+4a^{2}\sigma^{4})}+\varphi\right)}\right]\right\}, \label{Phi-1-eta}\\
\Phi_{2}(\eta)  = & \frac{\sigma}{4\sqrt{2\pi(b^{2}+4a^{2}\sigma^{4})}}\left\{ 2e^{-\frac{2(\eta-ag)^{2}\sigma^{2}}{b^{2}+4a^{2}\sigma^{4}}}+e^{-\frac{2\eta^{2}\sigma^{2}}{b^{2}+4a^{2}\sigma^{4}}}(1-\cos\varphi)\right\}\nonumber\\
&-\frac{\sigma}{\sqrt{8\pi(b^{2}+4a^{2}\sigma^{4})}}\left\{e^{-\frac{[\eta^{2}+(\eta-ag)^{2}]\sigma^{2}}{b^{2}+4a^{2}\sigma^{4}}}\left[\cos{\scriptstyle \left(\frac{bg(2\eta-ag)}{2(b^{2}+4a^{2}\sigma^{4})}\right)}-\cos{\scriptstyle \left(\frac{bg(2\eta-ag)}{2(b^{2}+4a^{2}\sigma^{4})}+\varphi\right)}\right]\right\}. \label{Phi-2-eta}
\end{align}

The probabilities of detector clicking for $D_{1}$ or $D_{2}$ in the presence of the
weak measuring device are
\begin{align}
\textrm{Prob}\left(D_{1}\right) = &~ \frac{1}{8}\left[3-\cos\varphi+2e^{-\frac{g^{2}}{8\sigma^{2}}}\left(1-\cos\varphi\right)\right], \label{eq:Prob-D1}\\
\textrm{Prob}\left(D_{2}\right) = &~ \frac{1}{8}\left[3-\cos\varphi-2e^{-\frac{g^{2}}{8\sigma^{2}}}\left(1-\cos\varphi\right)\right]. \label{eq:Prob-D2}
\end{align}

For post-selected $D_f$, the normalized final meter distribution in basis $j$ is
\begin{equation}
\tilde{\Phi}_{f} (j)=\frac{1}{\textrm{Prob}(D_{f})}\Phi_{f} (j).
\end{equation}

\section{Mixing of hidden position and momentum outcomes}

\begin{figure*}[t]
\begin{centering}
\includegraphics[width=125mm]{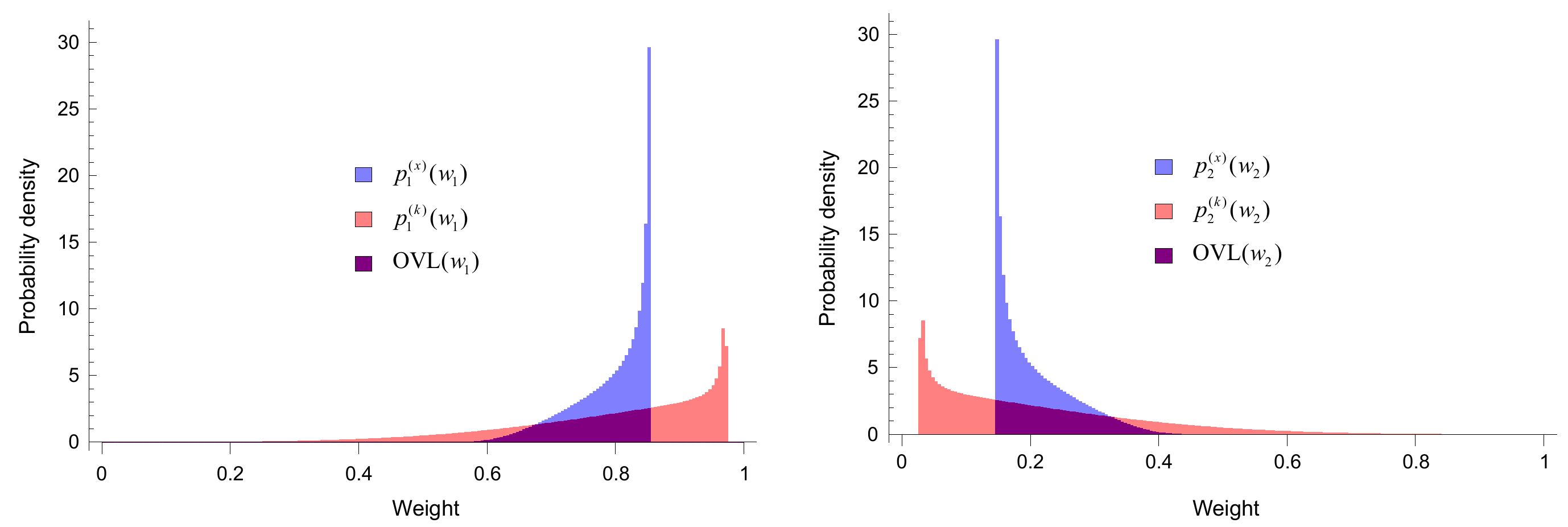}
\par\end{centering}
\caption{Plots of weight--probability histograms of $\Phi_{+}(x)$ and $\Phi_{+}(k)$ splitting towards $D_1$ or $D_2$ for $g=1$, $\sigma=1$ and $\varphi=\frac{\pi}{2}$.}
\label{fig:7}
\end{figure*}

\begin{definition}
(Weight--probability histogram of a composite distribution) Let $\Phi_{+}(\lambda)$ be a normalized probability distribution composed of two other distributions $\Phi_{+}(\lambda) = \Phi_{1}(\lambda) + \Phi_{2}(\lambda)$. The weight functions for splitting
\begin{equation}
w_{1}(\lambda)=\frac{\Phi_{1}(\lambda)}{\Phi_{+}(\lambda)},\quad\quad w_{2}(\lambda)=\frac{\Phi_{2}(\lambda)}{\Phi_{+}(\lambda)},
\end{equation}
are bound within the range $[0,1]$ and \mbox{$w_{1}(\lambda)+w_{2}(\lambda)=1$}.
Divide the weight function range into $n$ bins of width~$\Delta w$.
The probability $p_i(n)$ of the $n$th bin in the histogram is given by integration of $\Phi_{+}(\lambda)$ over the domain region $R_{\lambda}$ for which $(n-1)\Delta w \leq w_i(\lambda) \leq n \Delta w$. Compactly, the histogram is defined by the following integral measure
\begin{equation}
p_{i}^{(\lambda)}(n)  =  \int_{-\infty}^{\infty}\left\{\theta\left[w_{i}\left(\lambda\right)-(n-1)\Delta w\right]-\theta\left[w_{i}\left(\lambda\right)-n\Delta w\right]\right\}\Phi_{+}(\lambda)\,d\lambda
\end{equation}
where $\theta(\cdot)$ is the Heaviside step function. To convert the discrete probability histogram into a continuous probability density histogram, one needs to take the limit
\begin{equation}
p_{i}^{(\lambda)}(w_{i})=  \lim_{\Delta w\to0}\int_{-\infty}^{\infty}\frac{\theta\left[w_{i}\left(\lambda\right)-(n-1)\Delta w\right]-\theta\left[w_{i}\left(\lambda\right)-n\Delta w\right]}{\Delta w}\Phi_{+}(\lambda)\,d\lambda
\end{equation}
Then, the probability is recovered through integration with respect to the weight in the probability density histogram
\begin{equation}
\int_{0}^{1}p_{i}^{(\lambda)}(w_{i})dw_{i}=\int_{-\infty}^{\infty}\Phi_{i}(\lambda)\,d\lambda \label{eq:histo}
\end{equation}
Since $w_{1}(\lambda)+w_{2}(\lambda)=1$, the histograms for each of the two component distributions are with identical shape but reversed bins, $p_{1}^{(\lambda)}(w)=p_{2}^{(\lambda)}(1-w)$.
\end{definition}

\setcounter{theorem}{2}
\begin{theorem}
\label{theorem3}
Let $\Phi_{+}(x)$ and $\Phi_{+}(k)$ be two incompatible probability density distributions only one of which could actualize a measurement outcome, $x_j$ or $k_j$, per each run~$j$.
Without knowledge of which distribution will be used to generate the actual outcome for each run $j$, the two distributions could be split into component distributions such that $\Phi_{+}(x) = \Phi_{1}(x) + \Phi_{2}(x)$ and $\Phi_{+}(k) = \Phi_{1}(k) + \Phi_{2}(k)$, provided that
\begin{equation}
\int_{-\infty}^\infty \Phi_{i}(x) dx = \int_{-\infty}^\infty \Phi_{i}(k) dk, \label{eq:cond-1}
\end{equation}
even though the weight--probability histograms of $\Phi_{+}(x)$ and $\Phi_{+}(k)$ may not be identical.
\end{theorem}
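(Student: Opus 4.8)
The plan is to establish sufficiency of the stated condition by an explicit construction and its necessity by a short mass-balance argument, and then to reconcile the conclusion with the weight--probability histogram formalism of the preceding Definition.

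First I would make precise what the model must supply. Because Alice's basis is fixed only after the routing at $B_2$, each run must commit, for each detector $i \in \{1,2\}$, to a non-negative joint density $f_i(x,k)$ whose marginals reproduce the correct quantum component in whichever basis is later revealed, namely $\int_{-\infty}^{\infty} f_i(x,k)\,dk = \Phi_i(x)$ and $\int_{-\infty}^{\infty} f_i(x,k)\,dx = \Phi_i(k)$. Integrating either identity once more over the remaining variable returns the total mass routed to $D_i$; the two resulting values coincide only if $\int \Phi_i(x)\,dx = \int \Phi_i(k)\,dk$, which is exactly Eq.~\eqref{eq:cond-1}. This yields necessity immediately.

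For sufficiency I would display such an $f_i$ directly. Writing the common mass as $M_i = \int \Phi_i(x)\,dx = \int \Phi_i(k)\,dk$, the product ansatz $f_i(x,k) = \Phi_i(x)\,\Phi_i(k)/M_i$ is manifestly non-negative and, by a single integration in each variable, carries precisely the marginals $\Phi_i(x)$ and $\Phi_i(k)$. Summing over the detectors, $f_+ = f_1 + f_2$ then has marginals $\Phi_+(x)$ and $\Phi_+(k)$, while the induced splitting weights $w_i(x,k) = f_i(x,k)/f_+(x,k)$ automatically satisfy $0 \le w_i \le 1$ and $w_1 + w_2 = 1$, in accordance with Eqs.~\eqref{model-5}--\eqref{model-8}. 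Thus a valid splitting exists whenever Eq.~\eqref{eq:cond-1} holds.

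The final step is to tie this to the histogram remark and to locate the genuine difficulty. Averaging $w_i(x,k)$ over the unmeasured variable recovers the per-basis weights $w_i(x) = \Phi_i(x)/\Phi_+(x)$ and $w_i(k) = \Phi_i(k)/\Phi_+(k)$, whose weight--probability histograms each integrate to the same $M_i$ via Eq.~\eqref{eq:histo} but are otherwise free in shape; the product construction links them only through this common area, so they may indeed differ. I expect the main obstacle to be conceptual rather than computational: one must see why this positive result does not clash with the impossibility found for Eqs.~\eqref{eq:LHV-1}--\eqref{eq:LHV-2}, where $f_A$ and $f_B$ are additionally required to be fixed independently of $\varphi$ and pinned to the Wigner function, whereas Theorem~\ref{theorem3} imposes no such constraint and reduces to the elementary fact that two marginals of equal total mass always admit a common coupling.
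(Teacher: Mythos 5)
Your proposal is correct, but it takes a genuinely different route from the paper's own proof. The paper argues entirely at the level of the weight--probability histograms $p_i^{(x)}(w_i)$ and $p_i^{(k)}(w_i)$ introduced in the preceding Definition: it pairs $x$ and $k$ outcomes having equal splitting weight on the common overlap $\textrm{OVL}(w_i)=\min\bigl(p_i^{(x)},p_i^{(k)}\bigr)$, and then disposes of the two histogram remainders by pairing each leftover $x$ outcome with a convex mixture of $k$ outcomes whose weights average to $w_i(x)$; the equal-mass condition \eqref{eq:cond-1} enters only to guarantee that the two remainders carry the same total probability so that this mixing is possible. You instead exhibit an explicit coupling, the product ansatz $f_i(x,k)=\Phi_i(x)\Phi_i(k)/M_i$, for which non-negativity, the required marginals, and admissible weights $w_i^{+}=f_i/(f_1+f_2)\in[0,1]$ are immediate; in your version the equal-mass condition is exactly the normalization that makes the product reproduce both marginals, and your necessity direction is a correct addition not present in the paper. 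Your construction is shorter and arguably more airtight, since the paper's averaging step tacitly assumes the wider remainder always supplies enough mass on both sides of each target weight, which is asserted rather than verified. What the paper's matching argument buys in exchange is a direct account of the clause ``even though the weight--probability histograms may not be identical,'' because it shows concretely how mismatched histograms are reconciled by averaging, staying within the histogram formalism the appendix sets up. Your closing remark --- that the construction presupposes knowledge of $\varphi$ at commitment time and therefore does not contradict the obstruction derived from Eqs.~\eqref{eq:LHV-1} and \eqref{eq:LHV-2} --- coincides with the caveat the authors themselves state immediately after their proof.
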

\begin{proof}
Suppose that one of the weight--probability density histograms is wider than the other, e.g. $p_i^{(k)}(w_i)$ is wider than $p_i^{(x)}(w_i)$.
The overlap region is
\begin{equation}
\textrm{OVL}(w_{i})=\frac{p_{i}^{(k)}(w_{i})+p_{i}^{(x)}(w_{i})-\left|p_{i}^{(k)}(w_{i})-p_{i}^{(x)}(w_{i})\right|}{2}.
\end{equation}
Because the overlap region is identical for the two histograms, outcomes for $x$ or $k$ in $\textrm{OVL}(w_{i})$ could be directly matched based on their weights of splitting. What is left to be matched are the two histogram remainders,
\mbox{$p_{i}^{(x)}(w_{i}) - \textrm{OVL}(w_{i})$} and
\mbox{$p_{i}^{(k)}(w_{i}) - \textrm{OVL}(w_{i})$}.
Since $p_{i}^{(k)}(w_{i})$ was assumed to be wider than $p_{i}^{(x)}(w_{i})$, it follows that
\mbox{$p_{i}^{(k)}(w_{i}) - \textrm{OVL}(w_{i})$} will contain bins with weights outside the range of
\mbox{$p_{i}^{(x)}(w_{i}) - \textrm{OVL}(w_{i})$}.
Then from Eqs.~\ref{eq:histo} and \ref{eq:cond-1}, it follows that in the histogram remainders the total probabilities of splitting towards each of the detectors are the same.
Consequently, the weights for $x$ outcomes in \mbox{$p_{i}^{(x)}(w_{i}) - \textrm{OVL}(w_{i})$} could indeed be generated as averages from mixing of $k$ outcomes with lower and higher weights in \mbox{$p_{i}^{(k)}(w_{i}) - \textrm{OVL}(w_{i})$}.
\end{proof}

The parameter values for the interferometric setup, including Bob's choice of $\varphi$, could be chosen so that the the weight--probability histograms of $\Phi_{+}(x)$ and $\Phi_{+}(k)$ are manifestly different (Fig.~\ref{fig:7}).
For $g=1$, $\sigma=1$ and $\varphi=\frac{\pi}{2}$,
the quantum weights for splitting have largely different upper
and lower bounds, namely,
\mbox{$0.5\leq w_{1}(x)\leq0.854$} vs.
\mbox{$0.029\leq w_{1}(k)\leq0.971$}, and
\mbox{$0.146\leq w_{2}(x)\leq0.5$} vs.
\mbox{$0.029\leq w_{2}(k)\leq0.971$} (cf. Fig.~\ref{fig:3}).
Theorem~\ref{theorem3} does not directly help the construction of local hidden variables model, however, because the histograms of $\Phi_{+}(x)$ and $\Phi_{+}(k)$ depend strongly on Bob's choice~$\varphi$ and this information is not available at the time of commitment to an outcome by Alice's measuring device~$M$ when the mixing of $x$ and $k$ outcomes has to be done.
Instead, the local hidden variables model should use information for $\varphi$ only at $B_2$ as shown in Eqs.~\eqref{eq:LHV-1} and \eqref{eq:LHV-2}.

\end{document}